\definecolor{lightyellow}{rgb}{1,0.98,0.8}
\definecolor{orange}{rgb}{1,0.75,0.3}
\definecolor{brown}{rgb}{0.95,0.6,0.3}
\definecolor{pink}{rgb}{1,0.7,0.7} 
\definecolor{purple}{rgb}{0.8,0.7,1}
\definecolor{lightblue}{rgb}{0.65,0.7,1}
\definecolor{c1}{rgb}{0,0,0.8}
\definecolor{c2}{rgb}{0.25,0.,0.6}
\definecolor{c3}{rgb}{0.6,0.,0.3} 
\definecolor{c4}{rgb}{1,0,0}
\definecolor{c5}{rgb}{.5,0.2,0}
\definecolor{lightblack}{rgb}{.5,0.1,0.2}
\newcommand{\metabolitepolytope}{{\mathcal P}}
\newtheorem{lemma}{Lemma}
\newcommand{\inv}{^{-1}}
\newcommand{\diag}{\mbox{\rm Dg}}
\newcommand{\Rset}{{\mathbb R}}
\newcommand{\e}{\mbox{\rm e}}
\newcommand{\trans}{^{\top}}
\newtheorem{theorem}{Theorem}[section]
\newtheorem{proof}[theorem]{Proof}
\newcommand{\myparagraph}[1]{\vspace{-3mm}\paragraph{#1}}
\renewcommand{\diag}{\mbox{Dg}}
\newcommand{\sign}{\mbox{sign}}
\newcommand{\Ntot}{{{\bf N}^{\rm tot}}}
\newcommand{\Nint}{{{\bf N}}}
\newcommand{\ratelaw}{r}
\newcommand{\rrs}{h^{\rm v}}
\newcommand{\rrl}{h^{{\rm v}}_{l}}
\newcommand{\cv}{{\bf c}}
\newcommand{\vv}{{\bf v}}
\newcommand{\uv}{{\bf u}}
\newcommand{\kv}{{\bf k}}
\newcommand{\av}{{\bf a}}
\newcommand{\muv}{{\boldsymbol \mu}}
\newcommand{\ev}{{\bf e}}
\newcommand{\xv}{{\bf x}}
\newcommand{\nv}{{\bf n}}
\newcommand{\Esc}  {{\bf E}}
\newcommand{\wolfmore}[1]{}
\newcommand{\wolftodo}[1]{}
\newcommand{\elad}[1]{}
\newcommand{\eladtodo}[1]{}
\newcommand{\avi}[1]{}
\newcommand{\avitodo}[1]{}
\renewcommand{\myparagraph}[1]{}
\newcommand{\yy}{y}
\renewcommand{\Esc}{E'}
\newcommand{\enzyme}{\varepsilon}
\newcommand{\el}{\varepsilon_{l}}
\renewcommand{\ev}{{\boldsymbol \varepsilon}}
\newcommand{\zv}{{\bf z}}
\newcommand{\hEl}{h_{l}}
\newcommand{\hE}{h}
\newcommand{\concS}{s}
\newcommand{\concP}{p}
\newcommand{\kcatplus}{k^{+}_{\rm cat}}
\newcommand{\kcatminus}{k^{-}_{\rm cat}}
\newcommand{\kcat}{k^{\rm cat}}
\newcommand{\kcatl}{k^{\rm cat}_l}
\newcommand{\kM}{K^{{\rm M}}}
\newcommand{\kMS}{K_{{\rm S}}}
\newcommand{\kMP}{K_{{\rm P}}}
\newcommand{\kMi}{K^{{\rm M}}_{i}}
\newcommand{\kMj}{K^{{\rm M}}_{j}}
\newcommand{\kMli}{K^{\rm M}_{li}}
\newcommand{\keq}{K_{\rm eq}}
\newcommand{\fluxcost}{y^{\rm opt}}
\newcommand{\regterm}{y^{\rm reg}}
\renewcommand{\rrs}{\Upsilon}
\renewcommand{\rrl}{\rrs_{l}}
\newcommand{\rrlmin}{\rrs_{l}^{\rm (1)}}
\renewcommand{\ratelaw}{r}
\begin{document}

\title{The enzyme cost of given metabolic flux distributions, as a function of logarithmic metabolite levels, is convex}

\date{} \author{Wolfram Liebermeister$^{1}$ and Elad Noor$^{2}$\\
$^{1}$ Institut f\"ur Biochemie, Charit\'e - Universit\"atsmedizin Berlin\\
$^{2}$ Institute of Molecular Systems Biology, ETH Z\"urich}

\maketitle

\begin{abstract} 
Enzyme costs play a major role in the choice of metabolic routes, both
in evolution and bioengineering. Given desired fluxes, necessary
enzyme levels can be estimated based on known rate laws and on a
principle of minimal enzyme cost.  With logarithmic metabolite levels
as free variables, enzyme cost functions and constraints in optimality
and sampling problems can be handled easily.  The set of feasible
metabolite profiles forms a polytope in log-concentration space, whose
points represent all possible steady states of a kinetic model.  We
show that  enzyme cost is a convex function on this polytope. This
makes enzyme cost minimization -- finding  optimal enzyme
profiles and corresponding metabolite profiles that realize a
desired flux at a minimal cost -- a convex optimization problem.
\end{abstract}

\section{Introduction}

\myparagraph{Enzyme cost minimization} The metabolic fluxes in cells
are driven by enzymes, which come at a cost.  Translating a given flux
profile into the necessary enzyme profile, and computing the
corresponding enzyme cost, is not a trivial task. In kinetic models, a
reaction rates $v = E \cdot \ratelaw(\cv)$ is a product of enzyme
level $E$ and an enzyme-specific rate given by the rate law
$\ratelaw(\cv)$. If metabolite levels were known, the enzyme demand
could be directly computed; the specific enzyme demand $E/v =
1/\ratelaw(\cv)$ is simply obtained by inverting the rate law.
However, since metabolite levels are not fixed, the fluxes in a
network can be realized by many possible enzyme profiles, each with a
corresponding metabolite profile.  To select a plausible  solution, we employ
an optimality principle: we define an enzyme cost function (for
instance, total enzyme mass) and choose among all possible enzyme
profiles the one with the lowest cost. As a constraint, the
corresponding metabolite profile must respect physiological ranges and
energetic constraints implied by flux directions.
\myparagraph{Separable rate laws and separable enzyme cost functions}
The enzyme demand in a reaction, at a given desired flux, depends on
thermodynamic and kinetic factors.  To see what each factor
contributes, we split the formula for enzyme demands into a product of
terms, each with a simple interpretation.  The reaction rate depends
on enzyme level, forward catalytic constant $\kcatplus$ (i.e., the
maximal possible forward rate per mM of enzyme), driving force (which
determines the relative backward flux), and kinetic effects (such as
substrate saturation or allosteric regulation) that modify the forward
flux (see Figure \ref{fig:efficiencies}).

\begin{figure}[t!]
  \begin{center}
    \includegraphics[width=15cm]{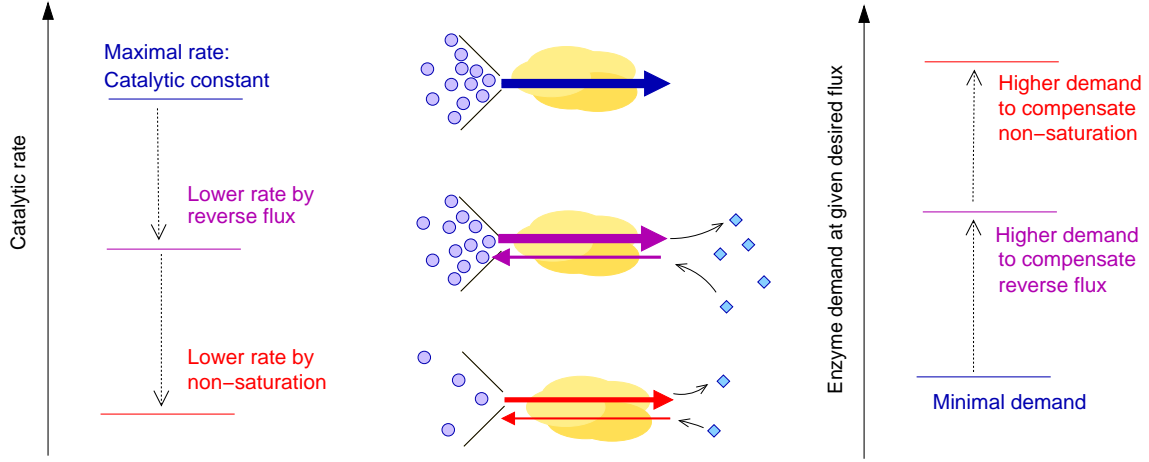}
  \end{center}
  \caption{The catalytic rate of enzymes is decreased by different
    physical factors.  Under ideal conditions, an enzyme molecule
    catalyses its reaction at a maximal rate, given by the enzyme's
    forward catalytic constant (top). The rate is reduced by
    microscopic reverse fluxes (center) and insufficient availability
    of substrate (incomplete saturation, leading to waiting times
    between conversion events). As the catalytic rate of the enzyme
    decreases (left), realizing a desired metabolic flux requires
    increasingly more enzyme (right).}
  \label{fig:efficiencies}
\end{figure}

\section{Enzymatic rate laws}

\myparagraph{Rate laws for general enzymatic reactions} Reactions of
the form A $\leftrightharpoons$ B can be described by the reversible
Michaelis-Menten kinetics.  A generalized form for
reactions with multiple substrates (concentrations $s_{i}$) and
products (concentrations $p_{j}$) reads
\begin{eqnarray}
  \label{eq:GeneralRateLawRate}
v &=&  E\, \frac{
k^+_{\rm cat} \prod_{i} (\frac{s_{i}}{\kMi})^{m^{\rm S}_{i}}
- k^-_{\rm cat} \prod_{i} (\frac{p_{i}}{\kMi})^{m^{\rm P}_{i}}
}{D(s_1, s_2, ..,p_1, p_2, ..)}.
\end{eqnarray}
The molecularities $m^{\rm S}_{i}$ and $m^{\rm P}_{i}$ represent the
(positive) stoichiometric coefficients, but they may be scaled by a
reaction-specific factor which effectively acts like a Hill
coefficient. Using a stoichiometric coefficient $n_{il}$ and a
  molecularity $m^{\rm S}_{li} = 2 |n_{il}|$ is equivalent to using a
  Hill coefficient of 2 in the rate law.  For reasons of thermodynamic consistency,
reaction rates must vanish in chemical equilibrium states; to ensure
this, equilibrium constants and rate constants must satisfy the
Haldane relationship \cite{hald:30}
\begin{eqnarray}
  \label{eq:HaldaneRelationship}
 \keq =  \frac{\prod_{i} (s^{\rm eq}_{i})^{m^{\rm S}_{i}}}
{\prod_{i} (p^{\rm eq}_{i})^{m^{\rm P}_{i}}}
=  \frac{  k^+_{\rm cat} \prod_{i} (\kMi)^{m^{\rm P}_{i}}}
{k^-_{\rm cat}  \prod_{i} (\kMi)^{m^{\rm S}_{i}}},
\end{eqnarray}
where $s_i$ and $p_j$ denote to substrate and product levels,
respectively.  Since the equilibrium constants depend on the Gibbs
energies of formation as $\keq = \e^{-\Delta_{\rm r}
  {G^{\circ}}'/RT}$, they must satisfy Wegscheider conditions
\cite{wegs:02}: the vector of equilibrium constants satisfies $\ln
\keq = {\Ntot}\trans \,{\muv^\circ}'$, with the stoichiometric matrix
$\Ntot$ for all metabolites and the vector ${\muv^\circ}'$ of
transformed Gibbs free energies of formation. Accordingly, the
equilibrium constants must satisfy a Wegscheider condition $\ln \keq
\cdot \kv = 0$ for any thermodynamic cycle $\kv$, i.e., any nullspace
vector of ${\Ntot}\trans$.  The denominator $D$ in
Eq.~(\ref{eq:GeneralRateLawRate}) depends on the enzyme mechanism.  In
general, it is a polynomial
\begin{eqnarray}
\label{eq:DECF4}
D(\cv) = 1 + \sum_{k} M_{lk} \prod_i c_{i}^{m_{lik}}
\end{eqnarray}
of the metabolite concentrations with positive coefficients $M_{lk}$
and exponents $m_{lik}$. For examples of such denominators, see
appendix \ref{sec:SIratelaws}.  In the underlying enzyme mechanism,
each sum term (index $k$) represents a binding state of the enzyme.
The exponents $m_{lik}$ indicate the numbers of reactant molecules
bound in one binding state and the prefactors encode the binding
energies. The sum term 1 represents the unbound enzyme.  The
highest-order substrate term reads $\prod_{i} (s_{i}/\kMi)^{m^{\rm
    S}_{i}}$ and the highest-order product term reads $\prod_{i}
(p_{i}/\kMi)^{m^{\rm P}_{i}}$.  In addition, the denominator may
contain additive or multiplicative terms for allosteric activation and
inhibition.  The exponents $m_{lik}$ are usually positive integer
numbers.  With allosteric regulation, however, there can also be
denominator terms of the form $\kMS/s$.

\section{Separable rate laws and enzyme cost}

Following \cite{nflb:13}, we consider general reversible rate laws and
factorize them into
\begin{eqnarray}
\label{eq:factorised}
  v &=& \enzyme \cdot \kcat \cdot \eta^{\rm th} \cdot \eta^{\rm kin} \cdot \eta^{\rm reg}, 
\end{eqnarray}
where $\kcat = \kcatplus$ is the forward catalytic constant.  For an
example of such a factorization, see appendix
\ref{sec:SIfactorization}.  The energetic efficiency
\begin{eqnarray}
\label{eq:factorisedThermo}
\eta^{\rm th} &=& 1-\frac{\Gamma}{\keq} = 1 - \e^{-\Theta}
\end{eqnarray} 
depends on the mass-action ratio $\Gamma$ (e.g., $\Gamma =
\concP/\concS$ for unimolecular reactions) and on the equilibrium
constant $\keq$, or briefly on the driving force $\Theta =
-\Delta_{\rm r} G/RT$. Note that our driving forces are defined via
molecularities, not via the stoichiometric coefficients; to allow for
a consistent equilibrium state, all reactants within a reaction must
show the same Hill coefficient \cite{liuk:10}. The relationship
$\Gamma/\keq=\e^{-\Theta}$ links concentrations to driving forces and
holds for ideal chemical mixtures with constant activity
coefficients. The kinetic efficiency $\eta^{\rm kin}$ depends on the
rate law and can be derived from the rate law's denominator.  
For a general reversible rate law, the kinetic efficiency would read
 \begin{eqnarray}
   \label{eq:GeneralRateLawMulti}
   \eta^{\rm kin} &=& \frac{\prod_{i} (s_{i}/\kMi)^{m^{\rm S}_{i}}}{D(s_1, s_2, .., p_1, p_2, ..)} 
 \end{eqnarray}
 where the substrate-dependent numerator $\prod_{i}
 (s_{i}/\kMi)^{m^{\rm S}_{i}}$ stems from the positive numerator term
 in the rate law and the denominator $D$ is given by the rate law
 denominator.  A factorized formula (\ref{eq:factorised}), called
 separable rate law \cite{nflb:13}, exists for reactions of arbitrary
 stoichiometry (for examples, see SI \ref{sec:SIratelaws}).  The
 factorization is always possible even for general rate laws, because
 rate law numerators must have the form $\kcatplus \prod_i\,
 (s_i/{\kM}_{li})^{m^{\rm S}_{li}} - \kcatminus\, \prod_i
 (p_i/{\kM}_{li})^{m^{\rm P}_{li}}$ for reasons of thermodynamic
 consistency.

\begin{figure}[t!]
\colorbox{lightyellow}{\begin{minipage}{15cm}
\hspace{1cm}\parbox{14cm}{\vspace{5mm} (a)  Reversible
    Michaelis-Menten kinetics (factorized, with driving force $\theta =
    -\Delta_{\rm r} G/RT$)

\[v = {\color{c1}\enzyme}
\cdot {\color{c2}\kcatplus}
\cdot \underbrace{{\color{c3}[1-\mbox{e}^{-\theta}]}}_{\eta^{\rm th}}
\cdot \underbrace{{\color{c4}\frac{\concS/\kMS}{1+\concS/\kMS+\concP/\kMP}}}_{\eta^{\rm kin}}
\cdot \underbrace{{\color{c4}\frac{1}{1+x/K_{\rm I}}}}_{\eta^{\rm reg}}\]

Rate = {\color{c1} Enzyme level} $\cdot$ {\color{c2}Forward catalytic constant} $\cdot$ {\color{c3}Energetic efficiency} $\cdot$ {\color{c4}Kinetic efficiency}\ \\

(b) Enzyme cost function (factorized form)

\[
\yy =  
          {\color{c5}h} \cdot {\color{c1}\enzyme} = 
          {\color{c5}h} \cdot v
          \cdot {\color{c2}\frac{1}{\kcatplus}}
          \cdot \underbrace{{\color{c3}\frac{1}{[1-\mbox{e}^{-\theta}]}}}_{1/\eta^{\rm th}}
          \cdot \underbrace{{\color{c4}\frac{1+\concS/\kMS+\concP/\kMP}{\concS/\kMS}}}_{1/\eta^{\rm kin}}
          \cdot \underbrace{{\color{c4} [1+x/K_{\rm I}]}}_{1/\eta^{\rm reg}}
          \]
}
\end{minipage}
}
  \caption{Separable rate law and enzyme cost function.  (a)
    Reversible rate laws can be factorized \cite{nflb:13}.  The
    example shows a reaction S $\leftrightharpoons$ P with reversible
    Michaelis-Menten kinetics Eq.~(\ref{eq:mmratelaw}) and a 
    factor for non-competitive allosteric inhibition (inhibitor
    concentration $x$).  (b) The enzyme cost $y$ (enzyme level
    $\enzyme$, multiplied by the specific enzyme cost $\hE$) contains
    the terms from the rate law in inverse form. By omitting some
     terms (or replacing them by constant numbers), one obtains
    simplified enzyme cost functions.  }
  \label{fig:pscscores}
\end{figure}

The terms in the rate law (\ref{eq:factorised}) depend on metabolite
levels in different ways.  The first terms, $\enzyme \cdot \kcatplus$,
represent the maximal velocity (the rate at full substrate-saturation,
no backward flux, full allosteric activation), while the following
efficiency terms describe how this velocity is reduced in reality: the
factor $\eta^{\rm th}$ describes a reduction due to backward fluxes,
and the factors $\eta^{\rm kin}$ and $\eta^{\rm th}$ describe a
further reduction due to incomplete substrate saturation and
allosteric regulation.  While $\kcatplus$ is an enzyme-specific
constant (yet, dependent on conditions such as pH, ionic strength, or
molecular crowding in cells; unit 1/s), the efficiency terms are
concentration-dependent, unitless, and can vary between 0 and 1. The
thermodynamic efficiency $\eta^{\rm th}$ depends on the driving force
(and thus, indirectly, on metabolite levels) and the equilibrium
constant is required for its calculation. The kinetic efficiency
$\eta^{\rm kin}$ depends directly on metabolite levels and contains
the $\kM$ values as parameters. Allosteric regulation can be captured
by $\eta^{\rm kin}$ (as additive or multiplicative terms in the
denominator), but non-competitive allosteric regulation can also be
described by a separate term $\eta^{\rm reg}$.  \myparagraph{Enzyme
  cost functions} If rate law, flux, and metabolite levels are known,
a reaction's enzyme demand follows from Eq.~(\ref{eq:factorised}) as
\begin{eqnarray}
\label{eq:factorisedU}
\enzyme_l(v, \cv) &=& v_l \cdot \frac{1}{\kcatl} \cdot
\frac{1}{\eta^{\rm th}_l(\Theta(\cv))}
 \cdot \frac{1}{\eta_l^{\rm kin}(\cv)} \cdot \frac{1}{\eta_l^{\rm reg}(\cv)}.
\end{eqnarray} 
By weighting the enzyme demand with an enzyme-specific
cost $\hEl$, we obtain the cost function
\begin{eqnarray}
\label{eq:ECF}
y_{l}(v_{l},\cv) &=& \underbrace{h_{l}\cdot \frac{1}{\kcatl}}_{\rrlmin} \cdot
\underbrace{\frac{1}{\eta^{\rm th}_l(\cv)}}_{[1-\e^{-\Theta_l(\cv)}]\inv}
\cdot \frac{1}{\eta_l^{\rm kin}(\cv)}
\cdot \frac{1}{\eta^{\rm reg}(\cv)}  \cdot v_{l} =\rrl(\cv)\,v_l
\end{eqnarray}
Dividing Eq.~(\ref{eq:ECF}) by $v_{l}$, we obtain the \emph{specific
  flux cost} $\rrl=y_{l}/v_{l} = \hEl/\ratelaw_l$.  Eq.~(\ref{eq:ECF})
shows which factors shape enzyme cost, and how. The first two terms
yield the minimal cost $\rrlmin= h_{l}/\kcatl$ (the cost per flux
under ideal conditions); the following terms further increase this
value.  To keep enzyme cost low, the $\kcat$ values, driving forces
(i.e.~imbalance in substrate and product levels), and substrate
saturation (i.e.~high substrate concentrations) should be high.  For a
pathway with desired fluxes $v_l$ and known log-concentrations $x_i =
\ln c_i$, the total cost reads
\begin{eqnarray}
\label{eq:TotalEnzymeDemand}
\yy_{\rm pw}(\xv) &=& \sum_{l} y_{l}(\vv, \cv) = \sum_{l} \hEl\,\el(v_{l},\cv).
\end{eqnarray}
Setting $\hEl=1$, we obtain the total enzyme demand (as a concentration
in mM), and setting $\hEl=m_l$ (protein mass in Daltons), we obtain
the mass concentration (in gram protein per cell dry weight) as
a special case.

\begin{figure}[t!]
  \begin{center}
    \begin{tabular}{l}
      (a)\\[2mm]
      \includegraphics[width=1.8cm]{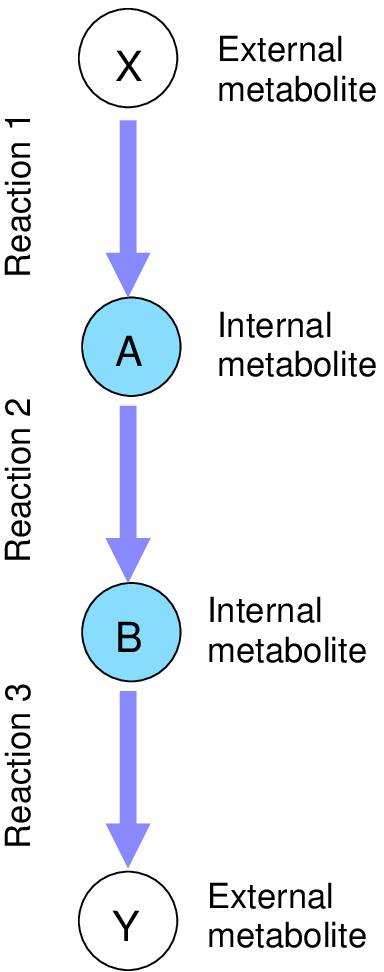}
    \end{tabular}
    \begin{tabular}{llll}
      (b)& (c)& (d)&(e) \\
      \includegraphics[height=2.8cm]{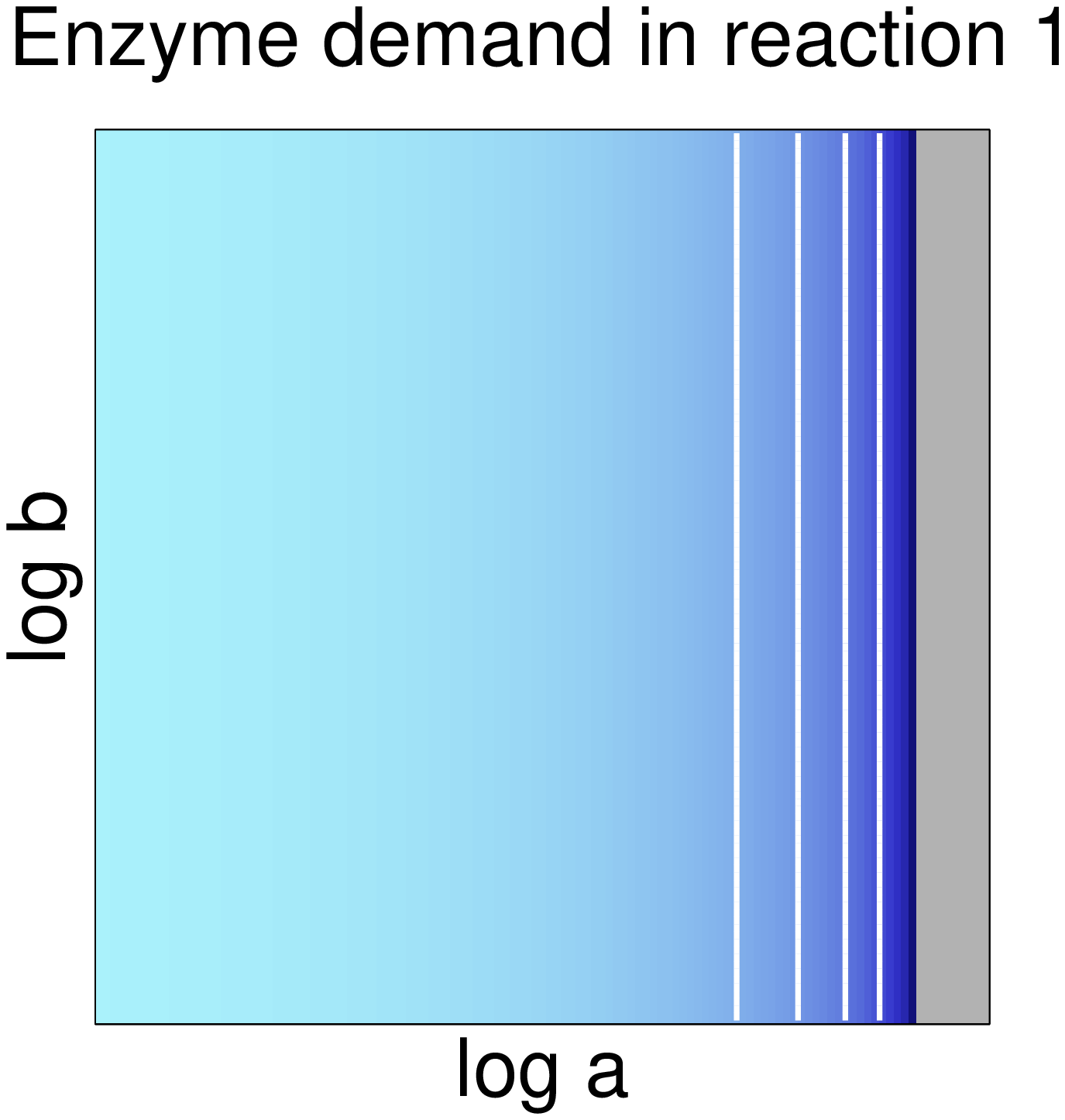} & 
      \includegraphics[height=2.8cm]{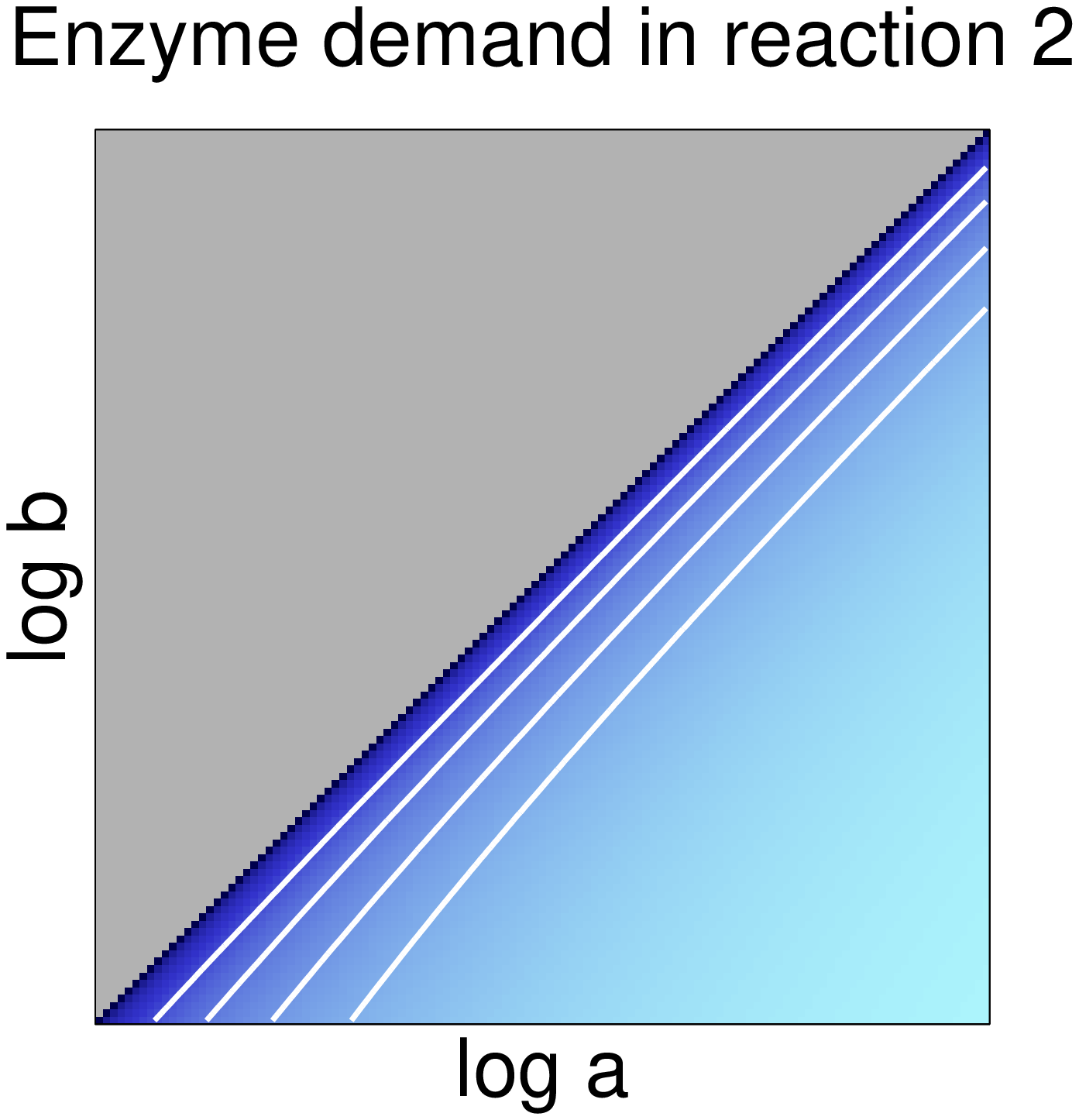}&
      \includegraphics[height=2.8cm]{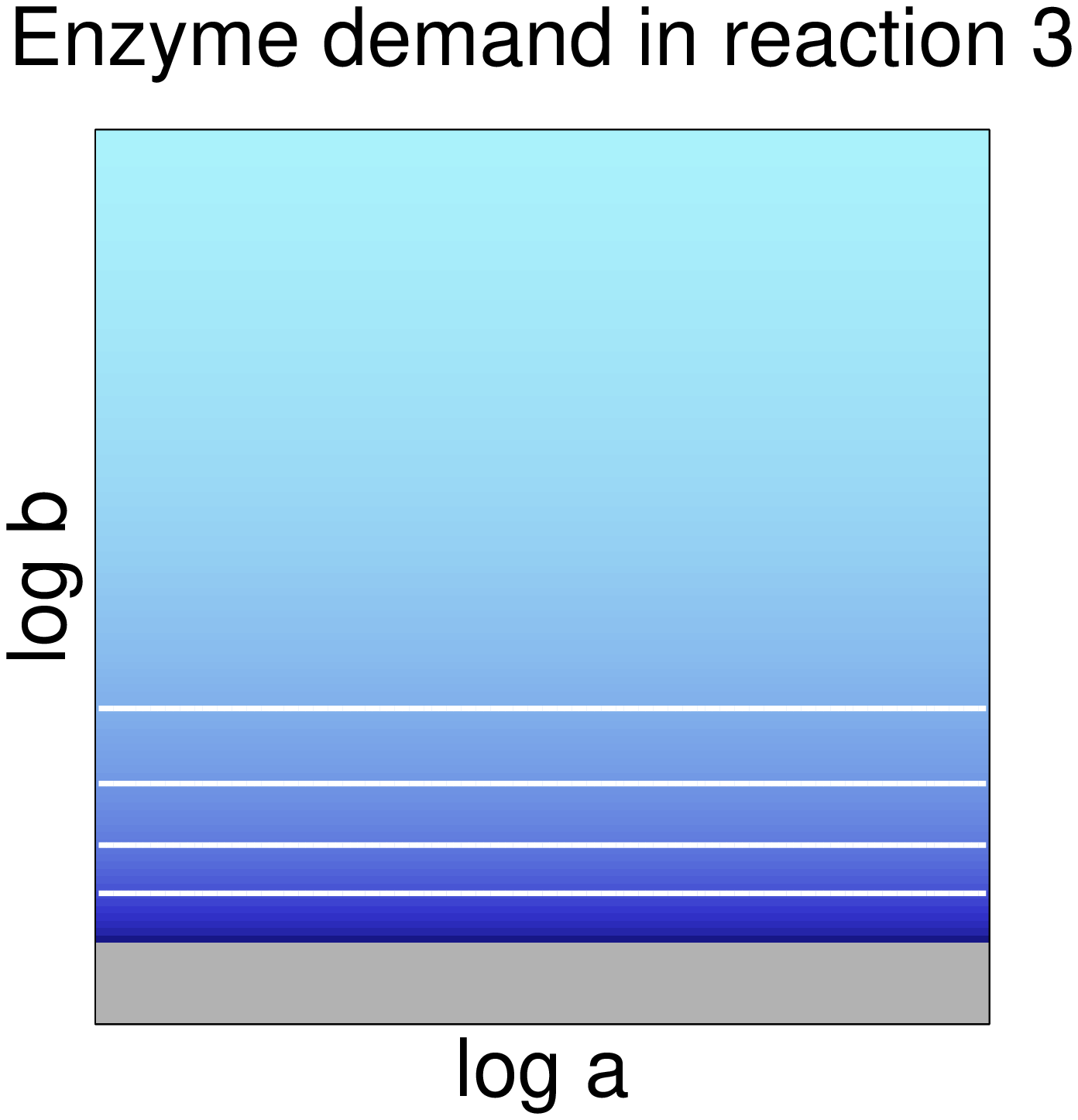}& 
      \includegraphics[height=2.8cm]{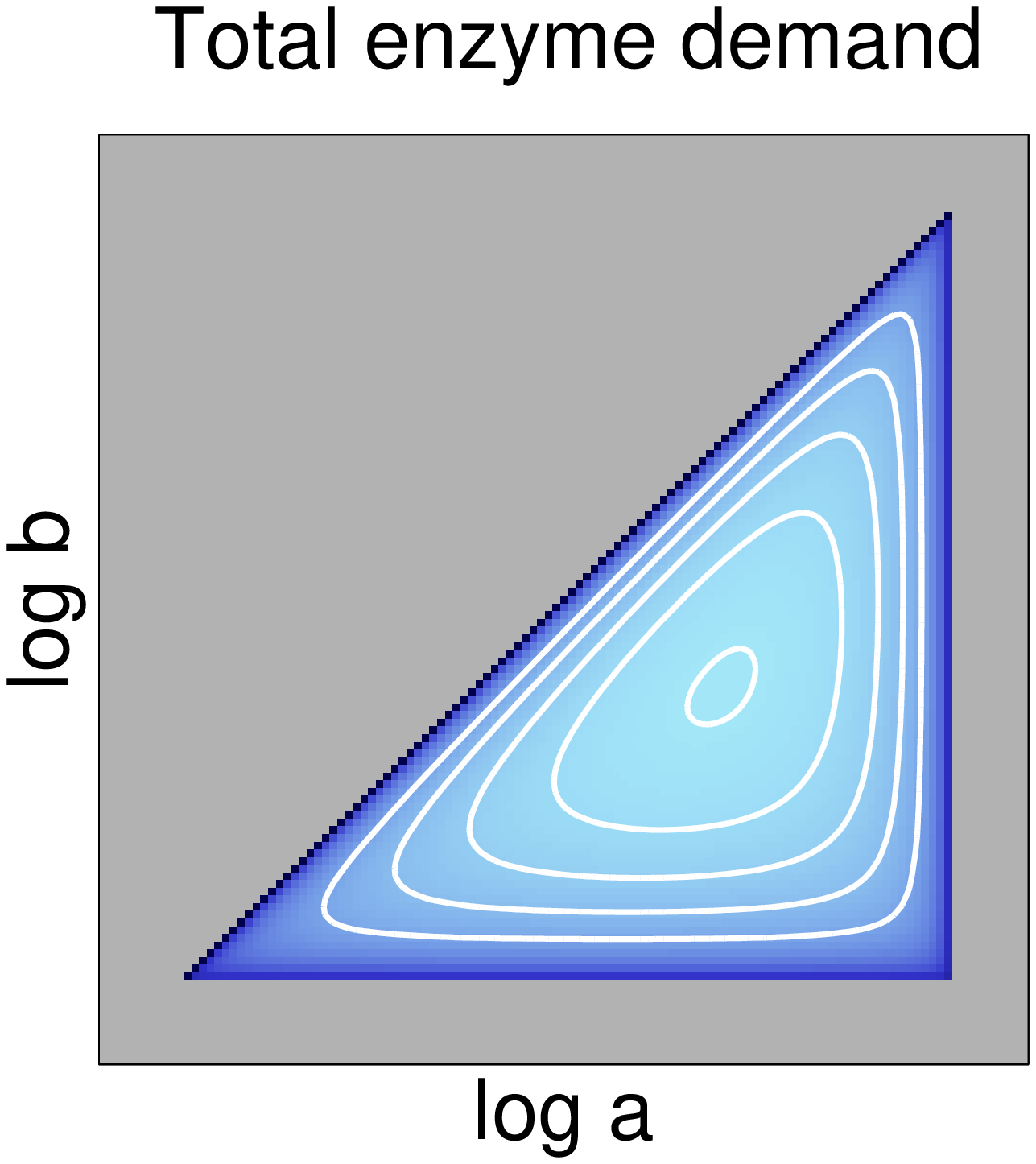}\\
      \includegraphics[width=2.8cm]{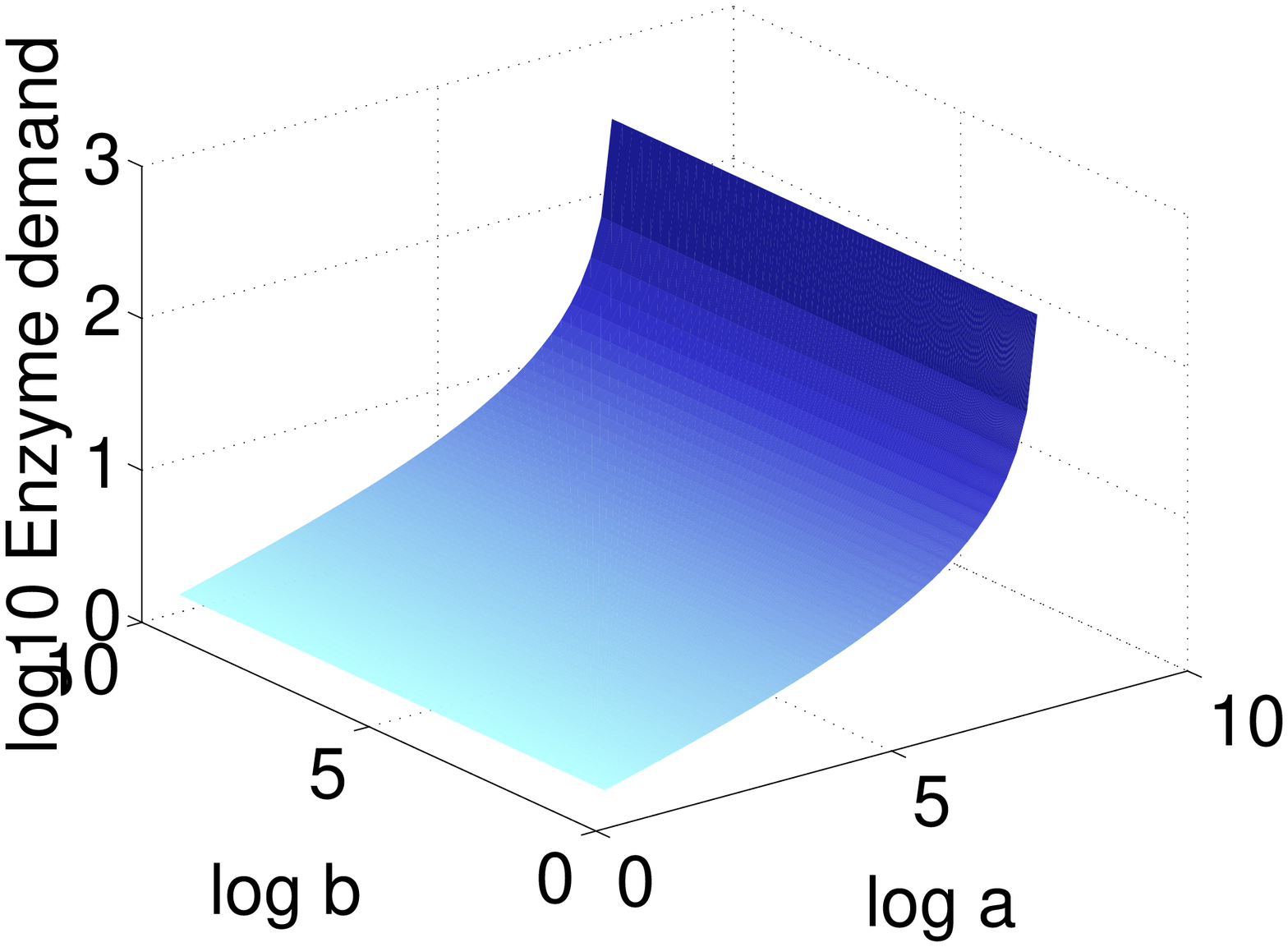} & 
      \includegraphics[width=2.8cm]{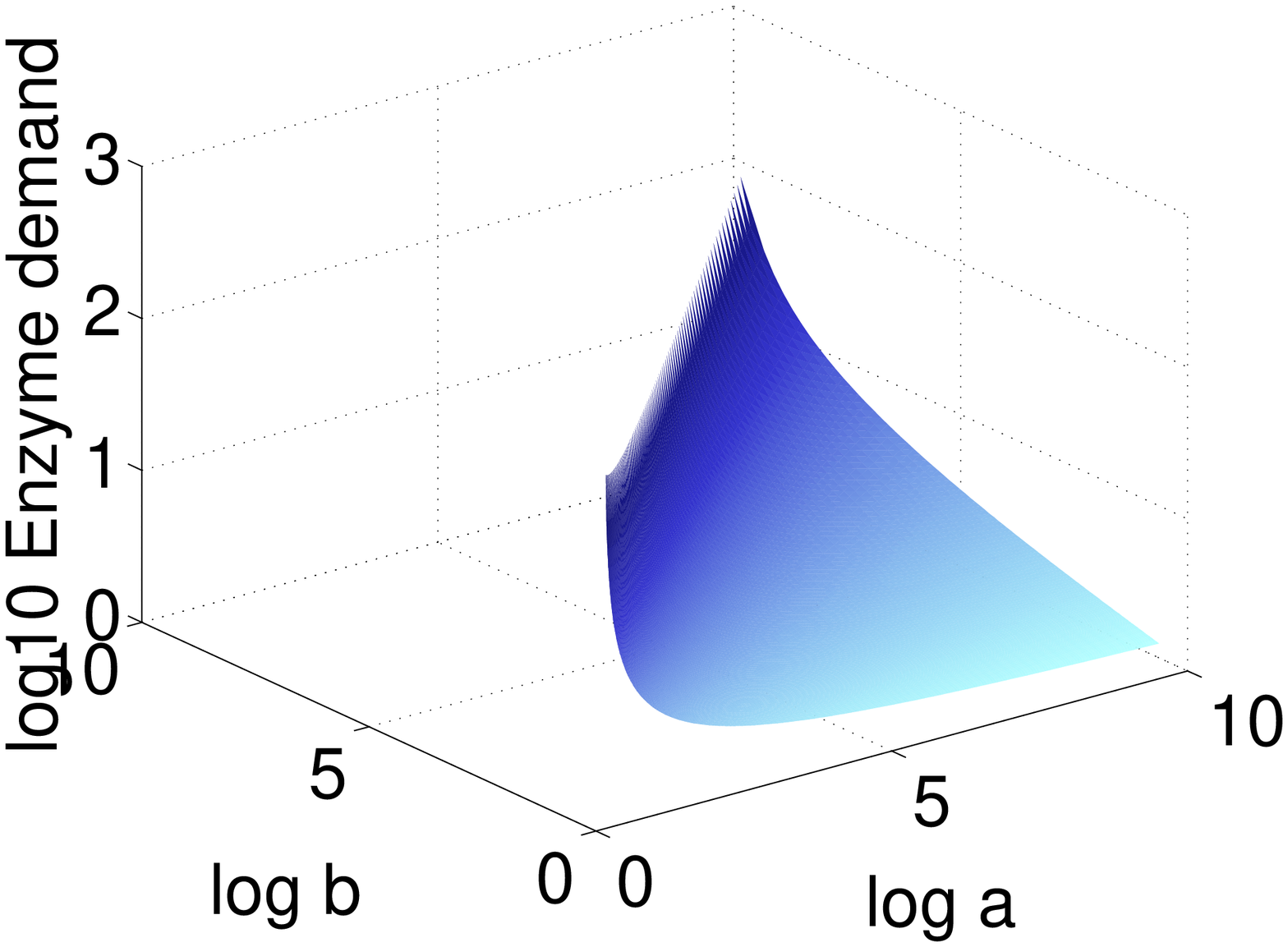}&
      \includegraphics[width=2.8cm]{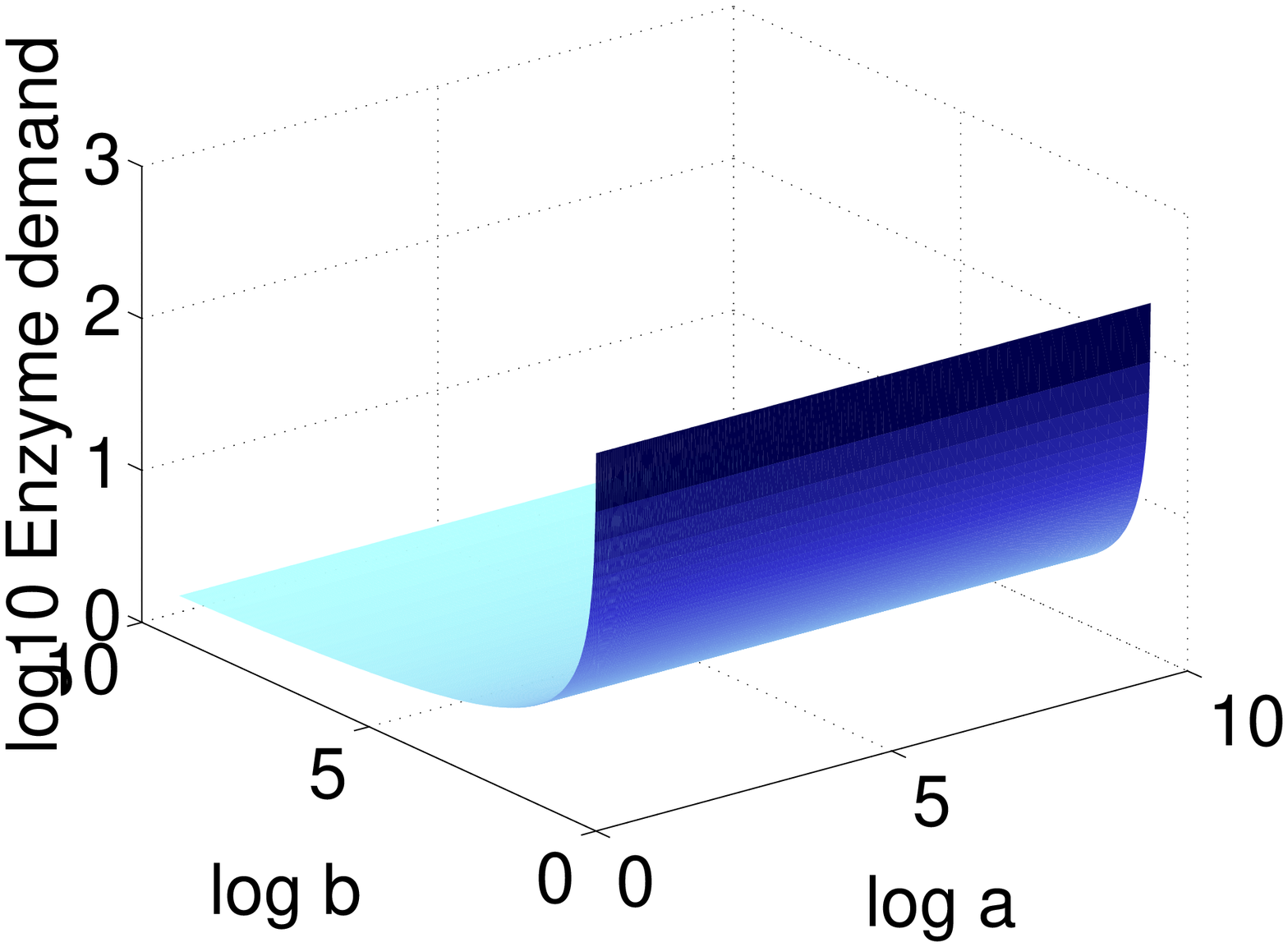}& 
      \includegraphics[width=2.8cm]{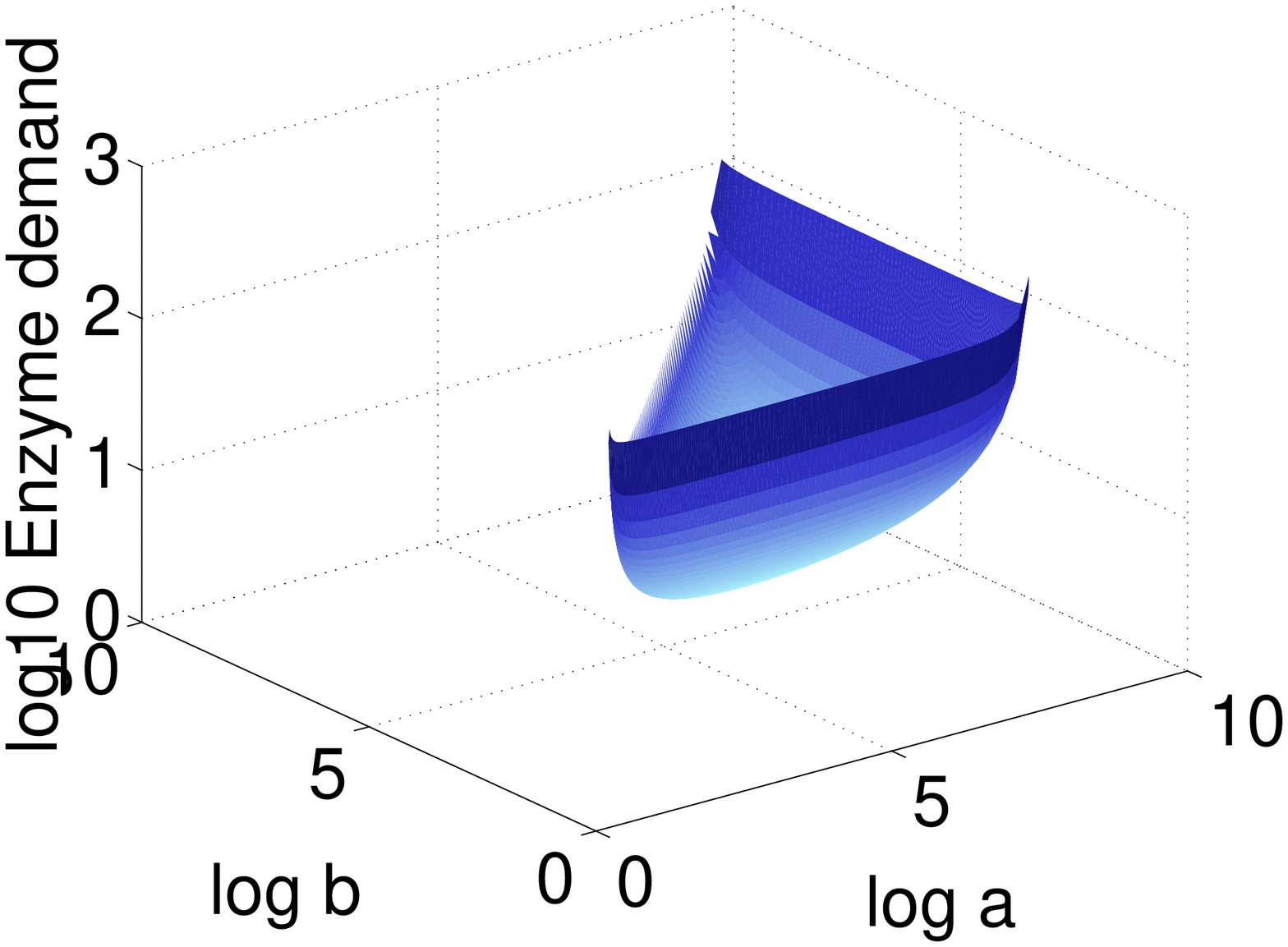}
 \end{tabular}
 \caption{Enzyme demand in a metabolic pathway.  (a) Pathway with
   reversible Michaelis-Menten kinetics (equilibrium constants and
   $\kM$ values are set to 1).  The external metabolite levels $x$ and
   $y$ are fixed, while internal levels $a$ and $b$ can vary.  Plots
   (b)-(d) show the enzyme demand for reactions 1, 2, and 3 (enzyme
   levels needed to sustain the desired unit flux). Regions of
   infeasible metabolite profiles are shown in grey.  At the edges of
   the admissible region (where chemical equilibrium would obtain),
   the thermodynamic driving force goes to zero; this must be
   compensated by a high enzyme level.  The enzyme demand in reaction
   1 (shown in (b)), for instance, increases with the level of A
   (x-axis) and goes to infinity as the mass-action ratio $a/x$
   reaches the equilibrium constant (where the driving force
   vanishes). (e) Total enzyme demand (sum of enzyme levels).  The
   metabolite polytope -- the intersection of feasible regions for all
   reactions -- is a triangle, and the enzyme demand is a cup-shaped
   function on this triangle.  The minimum point marks the optimal
   metabolite levels, from which optimal enzyme levels can be computed.}
  \label{fig:fourchain}
  \end{center}
\end{figure}

\section{The metabolite polytope represents the states of a kinetic model}
\label{sec:SIproofUniqueMapping}

\myparagraph{The metabolite polytope} A metabolic network (with given
flux directions, equilibrium constants, and metabolite bounds) defines
a convex \emph{metabolite polytope} $\metabolitepolytope$ in the space
of log-concentrations $x_i = \ln c_i$ (where $c_i$ is measured in
units of the standard concentration $c_\sigma=1$ mM).  An example is
shown in Figure \ref{fig:fourchain}.  In general, the polytope arises from two
sorts of inequality constraints: (i) Upper and lower bounds $x^{\rm
  min}_{i} \le x_{i} \le x^{\rm max}_{i}$ for metabolite levels yield
a box-shaped metabolite polytope; some metabolite levels may also be
constrained to a fixed value.  (ii) Reaction fluxes dissipate Gibbs
energy ($\Theta_l \cdot v_l>0$), so the driving forces must be
positive in the direction of the flux.  The resulting constraints $0 <
\Theta_l = \frac{1}{R T} \Delta_{\rm r}{G'}^{\circ}_{l}+ \sum_{i} \ln
c_{i}$ further restrict the metabolite polytope; they define E-faces
of the polytope (representing an equilibrium condition), where enzyme
costs rise steeply.  The metabolite polytope is a convex polytope in
log-concentration space, which contains all feasible metabolite
profiles.  It is bounded by two types of faces: faces that represent
an equilibrium in one of the reactions (``E-face''), where enzyme cost
goes to infinity; and faces stemming from physiological metabolite
bounds (``P-face''). Minimum points of the enzyme cost function can be
inside the polytope or on P-faces.  The polytope's precise shape
depends on the Enzyme Cost Function (ECF) score chosen (i.e, on the
simplifications applied) and on rate laws, rate constants, and
specific enzyme costs $\hEl$ in the model.

The metabolite polytope plays a central role in enzyme cost
minimization: For a given model and flux profile $\vv$, the points of
the polytope parametrize the set of all possible steady states
$(\ev,\cv,\vv)$.  Feasible metabolite profiles (represented by
polytope points) can be uniquely mapped to enzyme profiles, while the
mapping from enzyme to metabolite profiles need not be unique.  
The entirety of metabolic states (or all steady states) of a
kinetic model can be parametrized as follows: we consider the
(non-convex) flux polytope and construct, for each point, the
metabolite polytope. The construction yields all steady states
(characterized by concentrations, fluxes, enzyme levels) \emph{exactly
  once} (while the same enzyme profile may appear several times).
Using this fact, we can parametrize all metabolites states of a
kinetic model in a simple and systematic way (for details, see appendix \ref{sec:SIparametrization}). However, a restriction
to \emph{stable} steady states is not easily possible.

\section{Enzyme cost is a convex function on the metabolite polytope}

\myparagraph{Enzyme cost functions are convex on the metabolite
  polytope} The enzyme cost functions (\ref{eq:ECF}) and 
 (\ref{eq:TotalEnzymeDemand}) are differentiable convex functions
on the metabolite polytope (proof in SI \ref{sec:convexityProof}).
Convexity means that an interpolated metabolite vector, on a line
between two log-concentration vectors $\xv_{\rm a}$ and $\xv_{\rm b}$,
has a cost that is higher than (or at most equal to) the interpolated
cost:
\begin{eqnarray}
\label{eq:ExplainConvexity}
  \forall \lambda \in [0, 1]: \yy_{\rm pw}(\lambda \,\xv_{\rm a} +
  (1-\lambda)\,\xv_{\rm b}) \leq \lambda \, \yy_{\rm pw}(\xv_{\rm a}) +
  (1-\lambda)\,\yy_{\rm pw}(\xv_{\rm b}).
\end{eqnarray}
To show that the ECF scores are convex, we consider the most general
rate laws with denominator (\ref{eq:DECF4}) and rewrite it in the
form
\begin{eqnarray}
v &=&  \enzyme \cdot  \kcatplus \cdot  \eta^{\rm th} \cdot \eta^{\rm kin},
\end{eqnarray}
implying the  enzyme cost function
\begin{eqnarray}
\label{eq:myLittleCostFunction}
\yy = \sum_{l} y_{l} = \sum_{l} \frac{\hEl\,v_{l}}{\kcatl} 
\cdot \frac{1}{\eta^{\rm th}_{l}} \cdot\frac{1}{\eta^{\rm kin}_{l}}
\end{eqnarray}
for a pathway. The efficiency terms are given by
\begin{eqnarray}
\label{eq:SIConvexEtaDefinitions}
\eta^{\rm th} &=& 1 - \e^{-\Theta} = 1 - 
\exp \left(\frac{1}{RT}\,\Delta_{\rm r} {G^\circ}' + \sum_i n_i \ln c_i\right) \nonumber \\
\eta^{\rm kin} &=& 
\prod_i \left(\frac{s_i}{\kMi}\right)^{-m^{\rm S}_i}
\left(\sum_{k} M_{k} \prod_j c_{i}^{m_{ik}}\right)\inv
=\left(\sum_{k} \alpha_{k} \prod_j c_{i}^{a_{ik}}\right)\inv
\end{eqnarray}
 with coefficients $\alpha_k \in \Rset_+$ and $a_{ik}\in \Rset$.  The
 regulation efficiency $\eta^{\rm reg}$ can be neglected because it
 can always be covered by the term $\eta^{\rm kin}$.  
The cost function (\ref{eq:myLittleCostFunction}) with efficiency
terms (\ref{eq:SIConvexEtaDefinitions}) is convex on the metabolite
polytope.  The function stays convex if the investment function
$H(\ev)$ is not linear, but convex.  Importantly, even though all ECF
scores are convex, they may not be strictly convex (in which case
there would be a $<$ sign, instead of $\leq$, in
Eq.~(\ref{eq:ExplainConvexity})). For instance, simplified ECF scores
can be constant in the metabolite polytope.  Non-strict convexity can
arise when the mapping from enzyme to metabolite profiles is not
unique.  However, it is possible to enforce a unique optimum by adding
a convex regularization term $\regterm$, e.g., a quadratic function
favoring metabolite levels in the center of the typical concentration
range.  Such terms can be justified by biological side objectives: for
instance, keeping metabolite levels away from their upper or lower
bounds in advance will later allow cells to vary them more flexibly.
We can also consider a variant of ECM with an additional
metabolite-dependent objective $z$.  Instead of minimizing the enzyme
cost alone, we then minimize the difference $\yy_{\rm pw}^{\rm
  eff}(\xv) = \yy_{\rm pw}(\xv)-z^{\rm met}(\xv)$.  For instance, an
objective $z^{\rm met}(\ln \cv) = \sum_i \gamma_i\,(\ln c_i - \ln
\hat{c}_i)^2$ would represent a preference for metabolite levels close
to ``ideal'' levels $\hat{c}_i$, with cost weights $\gamma_i$.  If
$z^{\rm met}(\xv)$ is strictly concave, $-z$ can be used as a
regularization term $\regterm$. The resulting effective cost $\yy_{\rm
  pw}^{\rm eff}$ will be strictly convex even if the term $z$ is very
small.

\myparagraph{Enzyme cost minimization as an optimality problem for
  metabolite levels} The convexity proof suggests that enzyme levels
can be predicted with relatively little effort. Enzyme cost
minimization uses a metabolic network, a given flux profile $\vv$, and
possibly kinetic rate laws (with their thermodynamic or kinetic
constants), and specific enzyme costs.  The flux profile may be
stationary (like flux profiles determined by FBA) or non-stationary
(like experimentally measured fluxes, simply inserted into a
model). In any case, it must be free of thermodynamically unfeasible
cycles, and must agree with the assumed equilibrium constants and
external metabolite levels. If
the given flux directions are infeasible, the metabolite polytope
will be an empty set.  To find an optimal state, we choose an ECF
score and minimize the total enzyme cost within the polytope.  Optimal
metabolite profiles, enzyme profiles, and enzyme costs are obtained by
solving the Enzyme Cost Minimization (ECM) problem
\begin{eqnarray}
\label{eq:ECMproblem}
\fluxcost(\vv)   &=& \mbox{min}_{\xv \in \metabolitepolytope}\, \yy_{\rm pw}(\xv)\nonumber \\
\xv^{\rm opt}(\vv) &=& \mbox{argmin}_{\xv \in \metabolitepolytope}\, \yy_{\rm pw}(\xv)\nonumber \\
\ev^{\rm opt}(\vv) &=& \ev(\vv,\cv^{\rm opt}(\vv))
\end{eqnarray}
for log-concentration vectors $\xv = \ln \cv$.  The total cost
$\yy_{\rm pw}(\xv)$ (see Eq.~(\ref{eq:TotalEnzymeDemand})) is a sum
of enzyme costs (\ref{eq:ECF}) or simplified ECF scores. If there is
no unique optimum for $\xv$ (because the cost function is constant
along some subspace, and therefore not structly convex), a unique solution can be enforced by adding a
convex regularization term $\regterm(\xv)$ to $\yy_{\rm pw}$.
Since the optimal enzyme levels depend
on external metabolite levels, they must be recalculated after changes
in external conditions.  If non-enzymatic reactions (typically with
mass-action rate laws) are included in the optimality problem, they
contribute to the energetic constraints, but not to the enzyme cost
function.

\section{Discussion}

\myparagraph{Metabolite levels are suitable variables for cost
  optimization} In summary, we saw that logarithmic metabolite levels
are suitable variables for screening, sampling, and optimizion of
metabolic states. Due to the mapping from metabolite profiles to
enzyme profiles, all feasible enzyme profiles can be reached, where
bounds on driving forces can be formulated as linear
constraints. Moreover, using the metabolite log-concentrations as free
variables does not only provide a good search space, but also
facilitates the optimization problem: under general and reasonable
assumptions, the ECF scores are convex functions on the metabolite
polytope.  Convexity holds for a wide range of rate laws, including
rate laws with allosteric regulation. As a consequence, the
optimization remains tractable for various rate laws and larger
metabolic networks.  Strict convexity (required for an isolated
optimum point) can be enforced by adding small regularization terms
$\regterm$, possibly representing biological side objectives.

\begin{figure}[t!]
  \begin{center}
    \includegraphics[width=9cm]{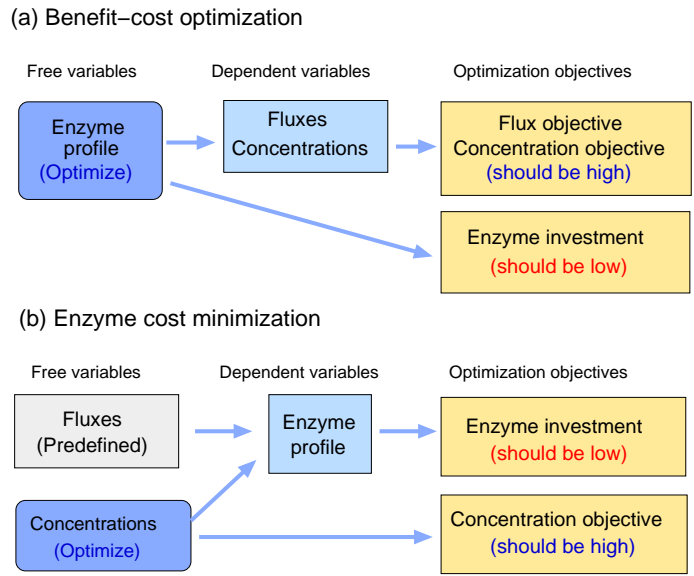}
  \end{center}
    \caption{Two ways of framing enzyme allocation as an optimality
      problem. (a) Benefit-cost optimization.  Each enzyme profile
      determines a metabolic state (with state variable vectors $\vv$
      and $\cv$ and a metabolic objective $z(\vv,\cv)$) and an
      investment $h(\enzyme_1, \enzyme_2, ..)$. To predict an optimal
      enzyme profile, we maximize the difference $z-h$. (b) In ECM, a
      predefined flux profile is realized by an enzyme profile (and a
      corresponding metabolite profile) with a minimal investment; a
      concentration objective can be considered in addition.  Using
      the metabolite concentrations as free variables makes the
      problem relatively easy to solve. }
  \label{fig:optimalityProb}
\end{figure}

\myparagraph{Relationship to other optimality-based metabolic models}
Optimal enzyme allocation in kinetic models can be framed in two main
ways. One the one hand, enzyme levels can be treated as control
variables which determine the metabolic state, and thus the fluxes
(see, e.g., \cite{reic:83,hekl:1996}). The aim is to find the enzyme
profile that leads to an optimal state (where enzyme cost can come
into play as a constraint or as a penalty functions).  On the other
hand (as in \cite{fnbl:13,tnah:13} and here), one can predefine the
fluxes and search for enzyme levels that realize them in an optimal
way (here, minimal cost is used as the optimality criterion).  Both
approaches address similar problems and lead to equivalent
solutions. For instance, if we first maximize a flux at a fixed total
enzyme level, as in \cite{hekl:1996}, and then use this flux as a
constraint in ECM (with identical specific costs for all enzymes), we
recover the metabolite and enzyme profile from the initial
calculation. In fact, both optimality problems can be derived from a
common general optimality problem by constraining the fluxes or the
total enzyme cost.  The approaches frame the same problem, but in
different ways.  Using fluxes as a scaffold for model construction has
several advantages. First, it makes the optimality problem convex.
Second, we can precisely specify the flux state to be modelled. Third,
the flux cost functions $\fluxcost(\vv)$ can be used in flux analysis.
Thus, ECM enables more realistic variants of FBA; the cost function
may contain additional side objectives scoring the metabolite levels.
If flux profiles are compared at a given flux benefit, it is only the
cost scores that count in the optimization, so cost and benefit can
be measured in different units.  In benefit-cost approaches as in
Figure \ref{fig:optimalityProb} (a), enzyme investment and metabolic
objective are directly compared and show the same physical units.  In
order to make them comparable, a relative weighting would have to be
established, which bears the problem of arbitrariness.

\myparagraph{What other objectives matter besides cost-optimality, and
  how can they be included in ECM?}  Our fundamental assumption --
that enzyme levels are cost-optimized in every moment -- is of course
debatable.  Instead, proteins may be expressed to higher amounts to
anticipate sudden challenges (example: energy production in muscle
cells).  Preemptive protein expression can avoid costs for rearranging
the proteome and performance losses during adaptation; however, in ECM
it would appear futile. Also flexibility in metabolite concentrations
can be important, and cells might trade it against enzyme
economy. Furthermore, enzyme and metabolite levels in cells are not
only shaped by demands in a single pathway, but also by other pathways
outside the model in question. Finally, if proteins are used as an
amino acid storage, there will be little pressure to keep them at low
concentrations. How can our method be useful despite all this? First,
an account of simple enzyme economy can be a basis for studying more
complicated optimality requirements afterwards.  Second, despite all
these points of critique, enzyme economy may be the main requirement,
e.g., during fast, nitrogen-limited growth.  Third, we can study how
deviations from the optimal state affect enzyme cost, and thus
fitness.  Finally, ECM can be extended to include more objectives and
constraints into our pathway model, and thus to account for the
surrounding cell.  The metabolites from our pathway may also be
involved in other pathways outside the model. If these other pathways
demand higher or lower metabolite levels, we can implement this fact
in ECM by constraints (upper and lower concentration bounds) or by
concentration-dependent side objectives $z^{\rm met}(\cv)$, which
penalize unfavorable metabolite levels. Trade-offs between the
 pathway in focus and other pathways around it can be handled in this
way. Alternatively, we can assume that each metabolite level should
be close to the centre of its allowed range (which also provides
flexibility, because it will not easily hit a bound).  Again, this can
be captured by side objectives. 

\myparagraph{Kinetics-based flux cost functions for usage in FBA.}
As a possible application, enzyme
 cost functions allow us to define non-linear flux costs for use
 in flux balance analysis.  ECM is based on 
 a given flux profile. However, since it can be applied 
to \emph{any} flux profile, it defines a flux cost
function $\fluxcost(\vv)$, which can be applied in flux prediction.
FBA with minimal fluxes (fmFBA) compares flux profiles at equal
benefit (FBA objective) and minimizes their heuristic cost.  The flux
cost functions used are linear (for a predefined choice of  flux directions). Linearity
simplifies calculations, but is not very realistic: first, cost scores
like the sum of fluxes do not account for kinetics and regulation;
second, the costs add linearly when flux distributions are linearly
combined. Flux cost functions obtained from ECM, and based on a
kinetic model, are more realistic. In an mfFBA based on such cost
functions, one would predefine flux directions, flux bounds, and a
flux benefit $\hat{\zv}^{\rm v} \cdot \vv = \hat b$, and assume
stationary fluxes; but instead of a linear flux cost, one would
minimize the flux cost $\fluxcost(\vv)$.  Flux costs derived from ECM
are concave functions on the flux polytope (with given flux
directions). This implies that the solutions of the new fmFBA problems
will be elementary flux modes, which confirms findings from other
enzyme optimality approaches \cite{murs:14,wpht:14}. In fact, the flux
cost function $\fluxcost(\vv)$ can be expected to be strictly concave
(except for specific cases, e.g.~models containing two identical
reactions with identical rate laws). If this is this case, elementary
flux modes are the only solutions.  As a consequence, splitting
a  flux profile into elementary modes that run in different
compartments or at different time points can be better, but never 
worse than the original flux profile in terms of enzyme cost.

\section*{Acknowledgements}
The authors thank Avi Flamholz, Ron Milo, Frank Bruggeman, Joost
Hulshof, and Meike Wortel for inspiring discussions.
This work was supported by the German Research Foundation (Ll
1676/2-1).

\bibliographystyle{unsrt}
\bibliography{/home/wolfram/latex/bibtex/biology}

\begin{appendix}

\begin{table}[t!]
\begin{center}
  \begin{tabular}{|lll|}
\hline
    \cellcolor{lightblue} \textbf{Name} &
    \cellcolor{lightblue}  \textbf{Symbol}&
    \cellcolor{lightblue} \textbf{Unit}\\ \hline & & \\[-2mm]
    Flux & $v_{l}$& mM/s \\
    Metabolite level & $c_{i}$ & mM     \\
    Logarithmic metabolite level & $x_i = \ln (c_{i}/c_{\sigma})$ & mM     \\
    Enzyme level & $\el$ & mM\\
    Reaction rate & $v_{l}(\el, \cv) = \el \cdot \ratelaw_{l}(\cv)$ & mM/s \\
    Specific rate & $\ratelaw_{l} = v_{l}/\el$& 1/s\\
    Scaled reactant elasticity & $\Esc_{li}$ & 1 \\
    Gibbs free energy of formation & ${G'}^\circ_{i}$ & kJ/mol\\
    Reaction Gibbs energy & $\Delta_{\rm r} G'_{l} = \Delta_{\rm r} {G'}^\circ_{l} + \sum_{i} n_{il}  RT \,\ln c_{i}$ & kJ/mol\\
    Driving force & $\Theta_l = - \Delta_{\rm r} G'_{l}/RT$ & 1 \\
    Forward/backward catalytic constant  & $\kcatplus, \kcatminus$      & 1/s    \\
    Michaelis-Menten constant     & $\kMli$ & mM     \\
    Specific enzyme cost & $\hEl$ & D/mM \\
    Enzyme cost  & $y_{l} = \hEl\,\el$ & D \\
    Total enzyme cost & $\yy = \sum_{l } \hEl \,\el$ & D \\
    Specific flux cost & $\rrl$ & D/(mM/s) \\
    Enzyme-optimal cost & $\fluxcost(\vv) = \min_{\xv in \metabolitepolytope} \yy_{\rm pw}(\xv)$ & D\\
\hline
  \end{tabular}
\end{center}
\caption{Terms and symbols used in enzyme cost minimization.  Darwin
  (D) is a hypothetical fitness unit. Reaction directions are defined
  in such a way that fluxes are positive.  To define
  log-concentrations, we use the standard concentration $c_\sigma=$
  1mM (shown here, but omitted elsewhere for simplicity.) }
\label{tab:symbolsshort}
  \end{table}

\section{Kinetic rate laws}
\label{sec:SIratelaws}

\myparagraph{Simplified denominators} By considering simple enzyme
mechanisms with few binding states, we obtain general rate laws
applicable to all reaction stoichiometries. The rate law denominators
to be used in  Eq.~(\ref{eq:GeneralRateLawRate})
have simple structures (containing only few of the possible sum terms,
and with prefactors following from a few Michaelis-Menten constants)
\cite{liuk:10}.  If denominator terms are omitted, the rate will be
overestimated, i.e., enzyme demand and costs will be underestimated.
First, there are rate laws with denominators
\begin{eqnarray}
  \label{eq:ECF2ratelaws}
 D^{\rm (S)} &=& \prod_{i} (s_{i}/\kMi)^{m^{\rm S}_{i}} \nonumber \\
 D^{\rm (SP)} &=& 
 \prod_{i} (s_{i}/\kMi)^{m^{\rm S}_{i}} + \prod_{j} (p_{j}/\kMj)^{m^{\rm P}_{i}}
\end{eqnarray}
which lead to the energetics-based ECF2 scores.  The big product terms
are called principal substrate and product terms. As before, $s_i$ and
$p_j$ denote substrate and product levels.  The first formula assumes
that substrate levels are high and product levels are low; the second
one assumes that substrate and product levels are both high. Next,
there are rate laws with denominators
\begin{eqnarray}
  \label{eq:ECF3ratelaws}
 D^{\rm (1S)} &=& 
1 +  \prod_{i} (s_{i}/\kMi)^{m^{\rm S}_{i}}\nonumber \\
 D^{\rm (1SP)} &=& 
1 + \prod_{i} (s_{i}/\kMi)^{m^{\rm S}_{i}}+ \prod_{j} (p_{j}/\kMj)^{m^{\rm P}_{i}}
\end{eqnarray}
which lead to the saturation-based ECF3 scores.  The denominators
contain only three possible terms: the term 1, the principal substrate
term, and the principal product term. To justify these rate laws, we
assume a strongly cooperative binding between substrates and between
products.  The first formula assumes low product concentrations; the
second formula describes the direct-binding modular rate law
\cite{liuk:10}.  The direct-binding modular rate law is a generalized
version of reversible MM kinetics.  In the underlying enzyme
mechanism, the enzyme exists in three states: fully bound with
substrates, fully bound with products, or empty.  
\myparagraph{Allosteric regulation}  
If enzymes are allosterically regulated, the rate law denominators
contain additive or multiplicative terms for regulation
\cite{liuk:10}.  Additive terms can arise from competitive
regulation. Multiplicative terms (for non-competitive regulation) can
be split from the denominator and treated as prefactors in the rate
law. Typical choices are $\frac{x}{x+k^{\rm A}_{X}}$ for
non-competitive activation and $\frac{k^{\rm I}_{X}}{x+k^{\rm I}_{X}}$
for non-competitive inhibition, with rate constants $k^{\rm A}$ and
$k^{\rm I}$ and regulator concentration $x$ \cite{liuk:10}.
Accordingly, allosteric effects can either be listed by a separate
efficiency term in the factorized ECF formulae, or be included in the
kinetic efficiency. For instance, the kinetic efficiency term for
MM-kinetics with non-competitive inhibition can be split into
\begin{eqnarray} 
 \eta^{\rm kin} = \frac{s/\kMS}{(1+\frac{x}{K_{\rm I}}) (1+\frac{s}{\kMS}+\frac{p}{\kMP})}  = 
\frac{1}{1+\frac{s}{\kMS}+\frac{p}{\kMP}} \frac{1}{(1+x/K_{\rm I})} = 
\eta^{\rm kin*}\, \eta^{\rm reg}.
\end{eqnarray} 

\section{Factorization of rate laws}
\label{sec:SIfactorization}

 \myparagraph{Separable rate laws and efficiency term} To demonstrate
 how rate law are factorized, we consider the common modular rate (CM)
 law \cite{likl:06a, liuk:10}, a generalized form of reversible MM
 kinetics with the denominator
\begin{eqnarray}
  \label{eq:ECF3ratelaws}
 D^{\rm (CM)} &=& 
 \prod_{i} (1+s_{i}/\kMi)^{m^{\rm S}_{i}}+ \prod_{j} (1+p_{j}/\kMj)^{m^{\rm P}_{i}} -1.
\end{eqnarray}
In the assumed enzyme mechanism, substrate molecules bind
independently, product molecules bind independently, and substrate and
product binding exclude each other.  For a bimolecular reaction $A + B
\rightleftharpoons P + Q$, the rate law
 \begin{eqnarray}
 \label{eq:convenienceKinetics}
 v &=& \enzyme\, \frac{\kcatplus\, \frac{[A][B]}{K_A\,K_B} - \kcatminus\, \frac{[P][Q]}{K_P\,K_Q}}
 {(1+\frac{[A]}{K_A})(1+\frac{[B]}{K_B})  + (1+ \frac{[P]}{K_P}((1 + \frac{[Q]}{K_Q}) -1} 
 \end{eqnarray}
can be rewritten as 
 \begin{eqnarray}
 \label{eq:convenienceKinetics2}
 &=& \enzyme\, \kcatplus\, \frac{\frac{[A][B]}{K_A\,K_B} - \frac{\kcatminus}{\kcatplus}\frac{[P][Q]}{K_P\,K_Q}}
 { (1 + \frac{[A]}{K_A} + \frac{[B]}{K_B} + \frac{[A][B]}{K_{AB}} + \frac{[P]}{K_P} + \frac{[Q]}{K_Q} + \frac{[P][Q]}{K_{PQ}})}\nonumber \\
 &=& \enzyme\, \kcatplus\, \frac{1 - \e^{-\Theta}} 
 {\frac{K_A\,K_B}{[A][B]} (1 + \frac{[A]}{K_A} + \frac{[B]}{K_B} + \frac{[A][B]}{K_{AB}} + \frac{[P]}{K_P} + \frac{[Q]}{K_Q} + \frac{[P][Q]}{K_{PQ}})}
 \nonumber \\
 &=& \enzyme\, \kcatplus\, [1 - \e^{-\Theta}] \eta^{\rm kin}
 \end{eqnarray}
 where we defined the kinetic efficiency
 \begin{eqnarray}
  \eta^{\rm kin} &=& \frac{1}
 {\frac{K_A\,K_B}{[A][B]} (1 + \frac{[A]}{K_A} + \frac{[B]}{K_B} + \frac{[A][B]}{K_{AB}} + \frac{[P]}{K_P} + \frac{[Q]}{K_Q} + \frac{[P][Q]}{K_{PQ}})}
 \end{eqnarray}
 and used the Haldane relationship $K_{eq} =
 \frac{\kcatplus}{\kcatminus} \frac{K_P\,K_Q}{K_A\,K_B}$ and the
 identity $\e^{-\Theta} = \frac{[P][Q]}{[A][B]}/K_{\rm eq}$.  In the
 calculation, we first separated the $k^+_{\rm cat}$ value from the
 rest of the fraction, and then hid the negative flux term in the
 energetic efficiency term $\eta^{\rm th}$.

\section{Parametrizing all states of a kinetic model}
\label{sec:SIparametrization}

In a kinetic model with given rate laws and external metabolite
concentrations, an enzyme profile $(\enzyme_1,.. \enzyme_2, ..)$ lead
to a steady state with metabolite levels $\cv$ and fluxes $\vv$. The
following proposition shows how the set ${\mathcal S}$ of such steady
states $\sigma = (\ev, \cv\, \vv)$ can be easily parametrized.

\textbf{Proposition:} Consider a kinetic model with rate laws $v_{l}=
\el\,\ratelaw_{l}(\cv)$, thermodynamically consistent rate constants
(satisfying Wegscheider conditions and Haldane relationships), a
feasible positive flux profile $\vv$, and bounds on metabolite
levels. Any feasible metabolite profile $\ln \cv \in
\metabolitepolytope$ can be realized by some positive enzyme profile
$\ev$; given the metabolite levels, the enzyme levels are uniquely
determined and given by $\el(\ln \cv) = v_{l}/\ratelaw_{l}(\cv)$,
which is a differentiable function on the metabolite polytope.

\textbf{Proof:} If a metabolite profile $\cv$ is feasible for the
given flux profile $\vv$, the specific rates $\ratelaw_{l}(\cv)$
obtained from reveresible rate laws (see
Eq.~(\ref{eq:GeneralRateLawRate}) in appendix) have the same signs as
$v_{l}$, so $\el = v_{l}/\ratelaw_{l}(\cv)$ is positive on the entire
metabolite polytope. Since $\ratelaw_{l}(\cv)$ is differentiable and
does not change its sign on the metabolite polytope, $\el(\ln \cv)$ is
differentiable on the metabolite polytope. 

According to our proposition, any thermodynamically feasible
metabolite profile can be realized by some steady state of the kinetic
model (with an appropriate choice of enzyme levels), so the set
${\mathcal S}$ of metabolic states with a given flux profile $\vv$ can
be characterized by points of the metabolite polytope.  In particular,
the set of kinetically realizable metabolite profiles depends on the
equilibrium constants, but not on enzyme-specific rate constants.

With simplified rate laws, the same enzyme profile may be realizable
by different metabolite profiles. (i) If a metabolite appears in a
model but has no impact on any reaction, its concentration can be
freely varied, independently of the enzyme levels.  (ii) With
simplified cost scores in which all efficiencies $\eta^{\rm th}$,
$\eta^{\rm kin}$, and $\eta^{\rm reg}$ are taken to be constant,
enzyme levels do not depend on metabolite levels. (iii) With
simplified scores in which $\eta^{\rm kin}$ and $\eta^{\rm reg}$ are
taken to be constant, enzyme costs depend on metabolite levels
\emph{only through the thermodynamic forces}. Notably, the vector $\xv
= \ln \cv$ can be varied along directions in the nullspace of
$\Ntot\trans$ without affecting the driving forces or enzyme
cost. Thus, the enzyme cost scores have an invariant subspace on the
metabolite polytope (namely the nullspace of $\Ntot\trans$).  Under
what conditions more complicated enzyme cost scores (without
regularisation terms) have unique optima remains an open question.

Finally, to parametrize \emph{all} steady states of a kinetic model,
we can follow a two-step procedure in which we enumerate all possible
flux distributions and, for each of them, all possible enzyme and
metabolite profiles. The thermodynamically feasible flux distributions
$\vv$ form a set ${\mathcal V}$, given by ${\mathcal V} = \{\vv |
\exists \xv: \sign(\vv) =\sign(-\Delta_{\rm r} G'(\xv) \}$, where
$\xv$ stands for log-concentration profiles. The reaction Gibbs
energies $\Delta_{\rm r} G'_l = \Delta_{\rm r} {G'}^\circ_l + RT
\sum_i n_{il}\,\ln c_i$ depend on the internal and external metabolite
levels and on the equilibrium constants chosen. According to
thermodynamic condition, whether a flux distributions is feasible or
infeasible depends solely on its sign pattern.  By imposing upper and
lower bounds and the stationarity condition, we can further limit this
set and obtain the set of feasible, stationary fluxes ${\mathcal
  V}^{\rm stat} = \{\vv | \vv^{\min} \le \vv \le \vv^{\rm max} \wedge
\Nint\,\vv = 0 \wedge \exists \xv: \sign(\vv) =\sign(-\Delta_{\rm r}
G(\xv) \}$. ${\mathcal V}^{\rm stat}$ is a (generally non-convex)
polytope in flux spaces.  Each flux distribution $\vv$ in this set
defines a set of possible states ${\mathcal S}_{\vv}$, one can then
set ${\mathcal S} = \{(\ev,\cv,\vv) | \vv \in {\mathcal V} \wedge
(\ev,\cv) \in {\mathcal S}_{\vv} \}$.
 
\section{Convexity proof for enzyme cost functions}
\label{sec:convexityProof}

To prove the convexity of general enzyme cost functions, on the
metabolite polytope and at given desired fluxes, we first show the
convexity of some simple functions appearing in the formula.

\subsection{General lemmata}
\begin{lemma}\label{lemma:exp_cvx} The function 
$f(y) = -\ln(1-\e^y)$ is convex in the range $y < 0$.
\end{lemma}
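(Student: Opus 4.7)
The plan is to verify convexity by the second-derivative test, which is the most direct route given how simple $f$ is. Since $f(y) = -\ln(1-\e^y)$ is clearly smooth on the open domain $y < 0$ (where $1 - \e^y \in (0,1)$), it suffices to compute $f''$ and show that it is non-negative on that domain.

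First I would compute $f'(y)$ by the chain rule, obtaining $f'(y) = \e^y/(1-\e^y)$. Then I would apply the quotient rule to differentiate once more and, after simplification of the numerator using the identity $\e^y(1-\e^y) + \e^{2y} = \e^y$, arrive at
\begin{equation*}
f''(y) = \frac{\e^y}{(1-\e^y)^2}.
\end{equation*}
Since the numerator is strictly positive for every real $y$ and the denominator is a nonzero square on $y<0$, this expression is strictly positive throughout the domain, which establishes convexity (and in fact strict convexity).

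There is no real obstacle here: the only thing to be careful about is the algebraic simplification of $f''$, and the fact that we only claim convexity on $y<0$ (the expression $1-\e^y$ changes sign at $y=0$, so the function is not even defined for $y\ge 0$). An alternative framing would be to substitute $u=\e^y\in(0,1)$ and view $f$ as the composition of the convex increasing function $u\mapsto -\ln(1-u)$ with the convex function $y\mapsto \e^y$, but composition preserves convexity only under monotonicity hypotheses that would need extra checking, so the direct derivative computation is cleaner and serves as the building block for the subsequent convexity arguments about $1/\eta^{\rm th} = 1/(1-\e^{-\Theta})$ that will appear in the full enzyme-cost proof.
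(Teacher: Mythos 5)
Your proposal is correct and follows essentially the same route as the paper: the paper's proof is exactly the second-derivative computation $f''(y) = \e^{y}/(1-\e^{y})^{2} > 0$ for $y<0$. Your additional remarks about the domain and the simplification are fine but add nothing beyond the paper's one-line argument.
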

\begin{proof} The second derivative 
\[\frac{\mbox{d}}{\mbox{d}y^2} f(y) = \frac{\e^y}{(1-\e^y)^2}\]  is positive for $y < 0$.
\end{proof}

\begin{lemma}\label{lemma:lse_cvx} The function 
$f(\xv) = \ln \sum_{k=1}^{n} \e^{x_k}$ is convex.
\end{lemma}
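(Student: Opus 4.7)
The plan is to establish convexity by computing the Hessian of $f$ and showing it is positive semidefinite everywhere, matching the style of the preceding lemma which used second-derivative tests. An alternative via Hölder's inequality exists, but the Hessian route is cleaner and parallels Lemma \ref{lemma:exp_cvx}.

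First, I would introduce the shorthand $S(\xv) = \sum_{k=1}^n e^{x_k}$ and $p_k(\xv) = e^{x_k}/S(\xv)$, noting that the $p_k$ are positive and sum to one, so they form a probability vector depending on $\xv$. A direct differentiation gives the gradient
\begin{equation*}
\frac{\partial f}{\partial x_i} = \frac{e^{x_i}}{S(\xv)} = p_i(\xv),
\end{equation*}
and then the Hessian entries
\begin{equation*}
H_{ij}(\xv) = \frac{\partial^2 f}{\partial x_i \partial x_j} = \delta_{ij}\,p_i(\xv) - p_i(\xv)\,p_j(\xv),
\end{equation*}
so in matrix form $\Hmat(\xv) = \diag(\pv) - \pv\pv\trans$ where $\pv = (p_1,\ldots,p_n)\trans$.

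Next I would verify that $\Hmat(\xv)$ is positive semidefinite. For any $\zv \in \Rset^n$,
\begin{equation*}
\zv\trans \Hmat(\xv) \zv = \sum_i p_i z_i^2 - \left(\sum_i p_i z_i\right)^{2}.
\end{equation*}
Since the $p_i$ are nonnegative and sum to one, the Cauchy--Schwarz inequality applied to the vectors $(\sqrt{p_i})_i$ and $(\sqrt{p_i}\,z_i)_i$ gives $(\sum_i p_i z_i)^2 \leq (\sum_i p_i)(\sum_i p_i z_i^2) = \sum_i p_i z_i^2$, so $\zv\trans \Hmat(\xv) \zv \geq 0$. Hence $\Hmat(\xv) \succeq 0$ on all of $\Rset^n$, and a $C^2$ function with PSD Hessian on a convex domain is convex.

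The main obstacle is essentially bookkeeping: ensuring that the Cauchy--Schwarz step is stated unambiguously (it is really a variance inequality, $\mathrm{Var}_{\pv}(\zv) \geq 0$, where the expectation is taken with respect to the discrete distribution $\pv$). This interpretation could be mentioned as a remark but is not needed for the proof. No analytical subtleties arise because $S(\xv) > 0$ everywhere, so $f$ is smooth on all of $\Rset^n$ and the derivative computation is valid globally.
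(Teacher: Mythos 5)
Your proof is correct and follows essentially the same route as the paper: compute the Hessian of the log-sum-exp function (yours is just the normalized form $\diag(\pv)-\pv\pv\trans$ of the paper's expression with $c_i=e^{x_i}$) and show positive semidefiniteness via the Cauchy--Schwarz inequality. No gaps; the variance interpretation is a nice but inessential extra.
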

\begin{proof}

\[\nabla^2 f(\xv) = \frac{\diag(\cv) ({\bf 1}\trans \cv) - \cv\,\cv\trans}{({\bf 1}\trans \cv)^2} ~~~~~~~~ (\text{where } c_i = e^{x_i})\]

\[\forall \uv:~~ \uv \trans \nabla^2 f(\xv) \uv = 
  \frac{(\sum_i c_i u_i^2)(\sum_i c_i)- (\sum_i u_i c_i)^2}{(\sum_i c_i)^2} \ge 0
\]
since $(\sum_i u_i\, c_i)^2 \le (\sum_i c_i\, u_i^2)(\sum_i c_i)$ from
the Cauchy-Schwarz inequality. Therefore, the Hessian $\nabla^2 f(\xv)$ is
positive semi-definite, which proves that  $f(\xv)$ is convex.

\end{proof}

\begin{lemma}\label{lemma:th_cvx}
For any number  $\nu \in \mathbb{R}_+$ and vector $\nv \in \mathbb{R}^m$, the
function $- \ln (1 - \nu\, \e^{\nv\cdot \xv} )$ is convex over $\{\xv \in
\mathbb{R}^m~|~\nu\, \e^{\nv \cdot \xv} < 1\}.$
\end{lemma}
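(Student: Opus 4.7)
The plan is to reduce this lemma directly to Lemma \ref{lemma:exp_cvx} via composition with an affine map, which is the standard trick for preserving convexity. I would first introduce the scalar substitution $y = \nv \cdot \xv + \ln \nu$, so that $\nu\,\e^{\nv \cdot \xv} = \e^{y}$. Under this substitution, the function of interest becomes $-\ln(1 - \e^{y})$, and the domain constraint $\nu\,\e^{\nv \cdot \xv} < 1$ translates exactly to $y < 0$.

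Next, I would invoke the standard fact that the composition $g \circ L$ of a convex function $g: \mathbb{R} \to \mathbb{R}$ with an affine map $L: \mathbb{R}^m \to \mathbb{R}$ is convex on the preimage of the domain of $g$. Here, $L(\xv) = \nv \cdot \xv + \ln \nu$ is affine, and $g(y) = -\ln(1 - \e^{y})$ is convex on $\{y < 0\}$ by Lemma \ref{lemma:exp_cvx}. Applying the composition rule yields convexity of $-\ln(1 - \nu\,\e^{\nv \cdot \xv})$ on $L^{-1}(\{y < 0\}) = \{\xv \mid \nu\,\e^{\nv \cdot \xv} < 1\}$, which is precisely the claimed domain.

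There is essentially no obstacle here: the heavy lifting was already done in Lemma \ref{lemma:exp_cvx}, which established the one-dimensional convexity by computing the second derivative. The only thing to check carefully is that $\ln \nu$ is well-defined, which is guaranteed since $\nu \in \mathbb{R}_+$ means $\nu > 0$. If one prefers a self-contained argument without invoking the composition rule as a black box, a one-line alternative is to verify directly that the Hessian equals $\frac{\nu\,\e^{\nv \cdot \xv}}{(1 - \nu\,\e^{\nv \cdot \xv})^2}\,\nv\,\nv\trans$, which is positive semidefinite as a nonnegative scalar times a rank-one outer product; but the affine-composition argument is cleaner and avoids recomputing derivatives.
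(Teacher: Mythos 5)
Your argument is correct and is essentially the paper's own proof: both compose $-\ln(1-\e^{y})$ from Lemma \ref{lemma:exp_cvx} with the affine map $y = \nv\cdot\xv + \ln\nu$ and invoke preservation of convexity under affine substitution, with the domain mapping over exactly as you state. Your extra remarks (well-definedness of $\ln\nu$ and the direct Hessian check) are fine but not needed.
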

\begin{proof}
 This function is a composition of $f = -\ln(1-\e^y)$ from Lemma
 \ref{lemma:exp_cvx} with the affine transformation $y = \nv \cdot \xv
 + \ln{\nu}$, an operation which preserves convexity.
\end{proof}

\begin{lemma}\label{lemma:kin_cvx}
For any matrix ${\bf A} \in \mathbb{R}^{n \times m}$ and vectors ${\bf
  b} \in \mathbb{R}^{n}_+$, the following function is convex over
${\bf x} \in \mathbb{R}^{m}$:
\begin{eqnarray}
\ln \left( \sum_{k=1}^{n} \e^{\av_{k}\cdot\xv + b_k} \right)
\end{eqnarray}
where $\av_{i}$ is the $i$th row of $\bf A$.
\end{lemma}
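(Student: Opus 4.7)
The plan is to recognize that the function in question is precisely the log-sum-exp function, whose convexity has already been established in Lemma \ref{lemma:lse_cvx}, composed with an affine map in $\xv$. Specifically, I would define $\yv = \Amat\,\xv + \bv \in \Rset^n$, so that the $k$-th coordinate is $y_k = \av_k \cdot \xv + b_k$. Then
\begin{eqnarray}
\ln\left(\sum_{k=1}^n \e^{\av_k \cdot \xv + b_k}\right) = g(\Amat\,\xv + \bv),
\end{eqnarray}
where $g(\yv) = \ln \sum_{k=1}^n \e^{y_k}$ is the log-sum-exp function.

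Next, I would invoke the standard fact that the composition of a convex function with an affine map is convex: if $g:\Rset^n \to \Rset$ is convex and $T(\xv) = \Amat\,\xv + \bv$, then for any $\lambda \in [0,1]$ and any $\xv_1, \xv_2 \in \Rset^m$,
\begin{eqnarray}
g(T(\lambda\,\xv_1 + (1-\lambda)\,\xv_2)) = g(\lambda\,T(\xv_1) + (1-\lambda)\,T(\xv_2)) \le \lambda\,g(T(\xv_1)) + (1-\lambda)\,g(T(\xv_2)),
\end{eqnarray}
where the equality uses linearity of $T$ and the inequality uses convexity of $g$. Since Lemma \ref{lemma:lse_cvx} has already proved that $g$ is convex, the composite function is convex on all of $\Rset^m$.

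Honestly, there is no real obstacle here; the substantive work was already done when establishing Lemma \ref{lemma:lse_cvx} via Cauchy--Schwarz on the Hessian. The only thing to be careful about is being explicit that the positivity assumption $\bv \in \Rset_+^n$ is not actually needed for convexity (only the affine structure matters), so I would remark that the lemma in fact holds for arbitrary $\bv \in \Rset^n$, with the positivity hypothesis being an artifact of how the lemma will later be applied (with $b_k = \ln \alpha_k$ for positive coefficients $\alpha_k$ appearing in the kinetic efficiency term $\eta^{\rm kin}$ of Eq.~(\ref{eq:SIConvexEtaDefinitions})).
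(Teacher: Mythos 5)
Your proof is correct and takes essentially the same route as the paper: both compose the log-sum-exp function from Lemma \ref{lemma:lse_cvx} with the affine map $\xv \mapsto \Amat\,\xv + \bv$ and invoke preservation of convexity under affine precomposition (you merely spell out the one-line verification the paper leaves implicit). Your side remark that the positivity of $\bv$ is unnecessary is accurate and consistent with how the lemma is later applied with $b_k = \ln \alpha_k$.
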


\begin{proof}
 This function is a composition of $f = \ln \sum_{i=1}^{n} \e^{x_i}$
 from Lemma \ref{lemma:lse_cvx} with the affine transformation $x_i =
 \av_{i}\cdot\xv + b_i$, an operation which preserves convexity.
\end{proof}

\subsection{The convexity of enzyme cost functions}

\begin{lemma}
Assume that all enzyme-catalysed reactions in a model behave according
to rate laws of the type
\begin{eqnarray}
v &=&  \enzyme \cdot  \kcatplus \cdot  \eta^{\rm th} \cdot \eta^{\rm kin},
\end{eqnarray}
with $\eta^{\rm th}$ and $\eta^{\rm kin}$ given by 
Eq.~(\ref{eq:SIConvexEtaDefinitions}), with coefficients $\alpha_k \in
\Rset_+$ and $a_{ik}\in \Rset$. Assume that the enzyme cost function
for enzymatic reaction $l$ reads
\begin{eqnarray}
y_{l} = \frac{\hEl\,v_{l}}{\el} = \frac{\hEl\,v_{l}}{\kcatl} 
\cdot \frac{1}{\eta^{\rm th}_{l}}
\cdot\frac{1}{\eta^{\rm kin}_{l}}\,.
\end{eqnarray}

Then the total enzyme cost $\yy = \sum_{l} y_{l}$, as a function
of logarithmic metabolite concentrations ($\xv = \ln \cv$), is convex.
\end{lemma}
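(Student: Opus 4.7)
The plan is to reduce the statement to the three general lemmata already proved above, via the standard log-convex $\Rightarrow$ convex bridge. First I would substitute $c_i = e^{x_i}$ in the formulas of Eq.~(\ref{eq:SIConvexEtaDefinitions}) so that every concentration-dependent factor becomes a log-linear expression in $\xv$: the thermodynamic efficiency reads $\eta^{\rm th}_l(\xv) = 1 - \nu_l\,e^{\nv_l\cdot\xv}$ with $\nu_l = \exp(\Delta_{\rm r} {G^\circ}'_l/RT)>0$, and the kinetic efficiency reads $\eta^{\rm kin}_l(\xv) = \bigl(\sum_k \alpha_{lk}\, e^{\av_{lk}\cdot\xv}\bigr)^{-1}$, after absorbing the $\kMli$ factors and the principal-substrate numerator into the positive coefficients $\alpha_{lk}$ and real exponent vectors $\av_{lk}$. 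Taking logarithms gives the decomposition
\begin{equation*}
\ln y_l(\xv) \;=\; \ln\!\left(\frac{\hEl\,v_l}{\kcatl}\right) \;-\; \ln \eta^{\rm th}_l(\xv) \;-\; \ln \eta^{\rm kin}_l(\xv),
\end{equation*}
in which only the last two terms carry $\xv$-dependence.

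Next I would apply the lemmata term by term. By Lemma~\ref{lemma:th_cvx} with parameters $\nu_l,\nv_l$, the map $\xv \mapsto -\ln \eta^{\rm th}_l(\xv)$ is convex on the open half-space $\nu_l\,e^{\nv_l\cdot\xv} < 1$, which is exactly the strict driving-force condition $\Theta_l(\xv)>0$. Intersecting these half-spaces over all reactions $l$, together with the physiological concentration bounds, recovers the interior of the metabolite polytope $\metabolitepolytope$. By Lemma~\ref{lemma:kin_cvx} applied with $b_k = \ln \alpha_{lk}$, the map $\xv \mapsto -\ln \eta^{\rm kin}_l(\xv) = \ln \sum_k e^{\av_{lk}\cdot\xv + \ln \alpha_{lk}}$ is convex on all of $\Rset^m$. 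Since a sum of convex functions is convex, $\ln y_l$ is convex on $\metabolitepolytope$.

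Finally, since each $\ln y_l$ is convex, $y_l = \exp(\ln y_l)$ is log-convex, and log-convex functions are themselves convex (composing any convex function with the convex, monotone increasing map $z \mapsto e^z$ preserves convexity). The total cost $\yy = \sum_l y_l$ is thus a nonnegative sum of convex functions and hence convex in $\xv$ on $\metabolitepolytope$; continuity extends the inequality to the closure, where $y_l \to +\infty$ along the E-faces. I expect the substantive content of the proof to lie entirely in the three lemmata already in hand; the real obstacle in the write-up is bookkeeping, namely rewriting the general rate-law denominator $D(\cv)$ in the log-linear form needed to read off $(\nu_l,\nv_l)$ and $(\alpha_{lk},\av_{lk})$ cleanly so that Lemma~\ref{lemma:th_cvx} and Lemma~\ref{lemma:kin_cvx} apply verbatim, and verifying that the absorbed coefficients remain positive so the log-sum-exp structure is preserved.
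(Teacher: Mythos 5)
Your proposal is correct and follows essentially the same route as the paper's own proof: rewrite $\eta^{\rm th}_l$ and $\eta^{\rm kin}_l$ in log-linear form in $\xv$, show $\ln y_l$ is convex term-by-term via Lemma~\ref{lemma:th_cvx} and Lemma~\ref{lemma:kin_cvx}, then pass from log-convexity to convexity by composing with $\e^z$ and sum over reactions. Your added remarks about the domain (intersection of the driving-force half-spaces with the concentration bounds giving $\metabolitepolytope$) make explicit what the paper leaves implicit, but the substance is identical.
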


\begin{proof}
To simplify the efficiency terms, we can use the abbreviations $x_{i} \equiv
\ln c_{i}$, $\nu \equiv \exp (\Delta_{\rm r} {G^\circ}'/RT)$, and $b_k = \ln
\alpha_k$:
\begin{eqnarray}
\eta^{\rm th} &=& 1 - \nu\,\e^{-\nv\cdot \xv} \nonumber \\
\eta^{\rm kin} &=& \left(\sum_{k=1}^{n} \e^{\av_{k}\cdot\xv + b_k}\right)\inv.
\end{eqnarray}

If we look at the natural logarithm of $y_{l}$,
\begin{eqnarray}
\ln y_{l} = \ln \left( \frac{\hEl\,v_{l}}{\kcatl}  \right) -\ln \eta^{\rm th}_{l} - \ln \eta^{\rm kin}_{l},
\end{eqnarray}
we see that each of the three terms in the sum is convex in $\xv$. The
first term is constant with respect to the metabolite concentrations
and therefore trivially convex.  The energetic term, $-\ln \eta^{\rm
  th} = - \ln (1 - \nu\,\e^{-\nv\cdot \xv})$, is convex according to
Lemma \ref{lemma:th_cvx}. The kinetic term, $-\ln \eta^{\rm kin} = \ln
\left(\sum_{k=1}^{n} \e^{\av_{k}\cdot\xv + b_k}\right)$, is convex
according to Lemma \ref{lemma:kin_cvx}. We conclude that $y_{l}$ is
convex too, since it is a composition of a convex function ($\e^x$)
with another convex function ($\ln y_{l}$).  Finally, the total enzyme
cost ($\yy$) is convex since it is a sum of convex functions:
\begin{eqnarray}
\yy = \sum_{l} y_{l}(\xv).
\end{eqnarray}
\end{proof}

\end{appendix}

\end{document}